\newcommand{\rC}{\mathrm{CYC}}
\newcommand{\rI}{\mathrm{INV}}
\let\bra\relax
\let\ket\relax
\DeclareMathOperator*{\E}{\mathbb{E}}
\DeclareMathOperator{\supp}{supp}
\DeclareMathOperator{\tr}{tr}
\newcommand{\abs}[1]{\left\vert#1\right\vert}
\newcommand{\bra}[1]{\left\langle#1\right|}
\newcommand{\ket}[1]{\left|#1\right\rangle}
\newcommand{\braket}[2]{\left\langle#1|#2\right\rangle}
\newcommand{\setb}[2]{\left\{#1 : #2\right\}}
\newcommand{\nth}[1]{\ensuremath{{#1}^{\mathrm{th}}}}
\let\defref\defn
\let\thmref\thm
\newcommand{\bbC}{\mathbb{C}}
\newcommand{\br}{{\bm{r}}}
\newcommand{\bx}{{\bm{x}}}
\newcommand{\by}{{\bm{y}}}
\newcommand{\bz}{{\bm{z}}}
\def\cO{{\mathcal{O}}}
\def\bbC{\mathbb{C}}
\def\ot{\otimes}
\newcommand{\proj}[1]{\left|#1\right\rangle\!\left\langle#1\right|}
\def\benum{\begin{enumerate}}
\def\eenum{\end{enumerate}}
\def\bit{\begin{itemize}}
\def\eit{\end{itemize}}
\def\bx{{\bm{x}}}
\def\by{{\bm{y}}}
\def\bz{{\bm{z}}}
\def\br{{\bm{r}}}
\newtheorem{lemma}{Lemma}
\newtheorem{definition}[lemma]{Definition}
\newtheorem{proposition}[lemma]{Proposition}
\newtheorem{corollary}[lemma]{Corollary}
\newtheorem{theorem}[lemma]{Theorem}
\newtheorem{transformation-rule}[lemma]{Transformation Rule}
\begin{document}

\title{Uselessness for an Oracle Model with Internal Randomness}
\author{Aram W. Harrow \\
{\small Center for Theoretical Physics} \\
{\small Massachusetts Institute of Technology}
 \and 
David J. Rosenbaum\thanks{Corresponding author: {\tt djr@cs.washington.edu}}\\
{\small Department of Computer Science \& Engineering} \\
{\small University of Washington}
}
\date{\today}
\maketitle
\thispagestyle{empty}

\begin{abstract}
  We consider a generalization of the standard oracle model in which the oracle acts on the target with a permutation selected according to internal random coins.  We describe several problems that are impossible to solve classically but can be solved by a quantum algorithm using a single query; we show that such infinity-vs-one separations between classical and quantum query complexities can be constructed from much weaker separations.

  We also give conditions to determine when oracle problems---either in the standard model, or in any of the generalizations we consider---cannot be solved with success probability better than random guessing would achieve.  In the oracle model with internal randomness where the goal is to gain any nonzero advantage over guessing, we prove (roughly speaking) that $k$ quantum queries are equivalent in power to $2k$ classical queries, thus extending results of Meyer and Pommersheim.
\end{abstract}

\newpage
\setcounter{page}{1}

\section{Introduction}
Oracles are an important conceptual framework for understanding quantum speedups.  They may represent subroutines whose code we cannot usefully examine, or an unknown physical system whose properties we would like to estimate.  When used by a quantum computer, the most general form of an oracle is a possibly noisy quantum operation that can be applied to an $n$-qubit input.  However, oracles this general have no obvious classical analogue, which makes it difficult to compare the ability of classical and quantum computers to efficiently interrogate oracles.  This was the original motivation of the {\em standard oracle model}, in which $f$ is a function from $[N]=\{1,\ldots,N\}$ to $\{0,1\}$, and the oracle $\mathcal{O}_f$ acts for a classical computer by mapping $x,y$ to $x,y\oplus f(x)$, and for a quantum computer as a unitary that maps $\ket{x,y}$ to $\ket{x,y\oplus f(x)}$.  One way to justify the standard oracle model is that if there is a (not necessarily reversible) classical circuit computing $f$, then $\mathcal{O}_f$ can be simulated by computing $f$, XORing the answer onto the target, and uncomputing $f$. 


In this paper, we consider other forms of oracles that are more general than the standard oracle model, but nevertheless permit comparison between classical and quantum query complexities.  Meyer and Pommersheim~\cite{meyer2009a} generalized the standard model by letting $A$ be a deterministic classical algorithm that takes the control $x$ of the oracle and computes a value $A(x)$.  The oracle then acts by applying a permutation $\pi_{A(x)}$ to the target.  We will further generalize the model by replacing $A$ with a randomized classical algorithm.  The random coins used by $A$ are internal to the oracle and cannot be accessed externally.  We call this concept an {\em oracle with internal randomness}. Note that even if $A$ takes no input, the oracle can still be interesting since it may apply different permutations depending on its internal coin flips.


Oracles with internal randomness correspond naturally to the situation in which a (quantum or classical) computer seeks to determine properties of a device that acts in a noisy or otherwise non-deterministic manner.  One simple example is an oracle that ``misfires'', i.e. when queried, the oracle does nothing with probability $p$ and responds according to the standard oracle model with probability $1-p$.  This model was considered in~\cite{regev2008a}, which found, somewhat surprisingly, that the square-root advantage of Grover search disappears (i.e. there is an $\Omega(N)$ quantum query lower bound for computing the OR function) for any constant $p>0$.

The rest of our paper is divided into two parts.  First, we explore various examples of oracles with internal randomness that demonstrate the power of the model.  We will see that in some cases (e.g. Theorems~\ref{thm:inv-unlimited} and \ref{thm:inv-quantum}), this can even result in problems solvable with one quantum query that are completely unsolvable using classical queries. 

In the second part, we consider the question of when oracle problems can be solved with any nontrivial advantage; i.e. a probability of success better than could be obtained by simply guessing the answer according to the prior distribution.  For an example of when such advantage is {\em not} possible, consider the parity function on $N$ bits.  If these bits are drawn from the uniform distribution, then any classical algorithm making $\leq N-1$ queries---or any quantum algorithm making $\leq \frac{N}{2}-1$ queries---will not be able to guess the parity with any nontrivial advantage.  In Section~\ref{uselessness}, we consider the problem of when some number of queries are {\em useless} for solving an oracle problem.  Informally, our main result is roughly that $k$ quantum queries are useless if and only if $2k$ classical queries are useless (this is formalized in Theorem~\ref{randomized-uselessness-theorem}).  However, a subtlety arises in our theorem when oracles have internal randomness, in that the $2k$ classical queries need to be considered as $k$ pairs, each of which uses a separate sample from the internal randomness of the oracle.


In the unbounded-error query complexity regime, similar results were obtained 15 years ago by Farhi, Goldstone, Gutman and Sipser~\cite{farhi1998a} for the case of the parity function.  More recently, Montanaro, Nishimura and Raymond~\cite{montanaro2007a} proved a similar result for any binary function $f$, using techniques that do not readily generalize to non-binary $f$.  One direction of the special case of our result for deterministic permutative oracles was proved by Meyer and Pommersheim~\cite{meyer2010a}.  Our proof is arguably simpler and more operational.  We introduce an analogue of gate teleportation~\cite{gottesman1999a} for oracles by showing that oracles can be (a) encoded into states analogous to Choi-Jamio\l{}kowski states, and (b) retrieved from those states with an exponentially small, but heralded, success probability (i.e. the procedure outputs a flag that tells us whether it succeeded or failed).  We expect that this characterization will be useful for future study of query complexity in the regime where any nonzero advantage is sought.

Finally, our encoding can be used to construct infinity-vs-one separations from {\em any} separation between classical and quantum uselessness (see Theorems~\ref{thm:gen-classical-uselessness} and \ref{thm:gen-quantum-uselessness}). 

\section{Examples of infinity-vs-one query-complexity separations}
\label{sec:infinity-vs-one}
In this section, we discuss problems that can be solved using a single quantum query but cannot be solved classically even with an unlimited number of queries.  Such a separation is far stronger even than exponential separations.  To achieve such infinity-vs-one separations, it is necessary (but not sufficient) for the oracle to have internal randomness since otherwise one could simulate the quantum algorithm classically with exponential overhead.  The key point is that internal randomness effectively causes a different oracle to be used for each query so such a simulation is not possible in this case.

\subsection{Distinguishing involutions with no fixed points from cycles}
\label{distinguishing-involutions-from-cycles-problem}
Our first example of an infinity-vs-one separation is given by the problem of distinguishing involutions from cycles.  Define $\rI = \setb{\pi \in S_N}{\pi^2 = 1 \text{ and } \pi x \not= x \text{ for all } x \in [N]}$; this is the set of involutions in $S_N$ with no fixed points.  Let $\rC = \setb{\pi \in S_N}{\pi \text{ is a cycle of length } N}$.  For any nonempty subset $S$ of $S_N$, define $\mathcal{O}_S$ to be the oracle with a control $x \in [N]$ and a target $y \in [N]$ that acts according to Algorithm~\ref{distinguishing-involutions-from-cycles-oracle}.

\begin{algorithm}
  \centering
  \begin{algorithmic}[1]
    \State Select $\pi \in S$ uniformly at random
    \State Compute $\pi(x)$ where $x$ is the value of the control
    \State Add $\pi(x)$ to the target $y$ modulo $N$
  \end{algorithmic} 
  \caption{The oracle for the problem of distinguishing involutions with no fixed points from cycles}
  \label{distinguishing-involutions-from-cycles-oracle}
\end{algorithm}

\begin{theorem}
\label{thm:inv-unlimited}
  Classical algorithms cannot solve the problem of distinguishing cycles from involutions with no fixed points with unbounded error using any number of queries.
\end{theorem}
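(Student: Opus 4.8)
The plan is to show that the entire query transcript has the \emph{same distribution} whether the oracle is $\mathcal{O}_{\rI}$ or $\mathcal{O}_{\rC}$. Since a classical algorithm's output is a (possibly randomized) function of that transcript alone, identical transcript distributions force the success probability to be exactly $1/2$, which rules out any advantage — and hence unbounded-error success — for every number of queries. (Implicitly this requires $N$ to be even so that $\rI$ is nonempty; I would record this promise at the outset.)

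The heart of the argument is a single-query claim: for a fixed control value $x \in [N]$, if $\pi$ is drawn uniformly from $\rI$ then $\pi(x)$ is uniform on $[N] \setminus \{x\}$, and the identical statement holds when $\pi$ is drawn uniformly from $\rC$. I would prove this by symmetry rather than by brute counting. Both $\rI$ and $\rC$ are single conjugacy classes of $S_N$ consisting entirely of fixed-point-free permutations, so each is invariant under conjugation by the stabilizer $H = \{h \in S_N : h(x) = x\} \cong S_{N-1}$. For $h \in H$ and $\pi$ in either class we have $(h \pi h^{-1})(x) = h(\pi(x))$, and $\pi \mapsto h \pi h^{-1}$ is a measure-preserving bijection of the class; hence the law of $\pi(x)$ is invariant under the action of $H$ on $[N]$. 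Since $\pi(x) \neq x$ always and $H$ acts transitively on $[N] \setminus \{x\}$, the law of $\pi(x)$ must be uniform on $[N] \setminus \{x\}$ in both cases. (A direct count confirms this: exactly $(N-2)!$ of the $(N-1)!$ cycles, and exactly $(N-3)!!$ of the $(N-1)!!$ fixed-point-free involutions, send $x$ to any prescribed $y \neq x$.)

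Next I would promote this to the multi-query statement, and here the defining feature of internal randomness does the work. Each query draws a \emph{fresh, independent} $\pi \in S$, so the responses to a sequence of queries are mutually independent, the response to a control $x_i$ being $y_i + \pi_i(x_i) \bmod N$ for an independent uniform $\pi_i$; as the algorithm chooses $y_i$, it effectively learns $\pi_i(x_i)$. Because the conditional law of $\pi_i(x_i)$ given the earlier transcript depends only on $x_i$ — and by the single-query claim is uniform on $[N] \setminus \{x_i\}$ whether $S = \rI$ or $S = \rC$ — a straightforward induction over the (possibly adaptive) choices $x_1, x_2, \ldots$ shows that the joint distribution of all responses is the same product distribution in both cases. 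Identical transcript distributions yield identical output distributions, so the algorithm is correct with probability exactly $1/2$ regardless of how many queries it makes.

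I expect the only genuine subtlety to be making the adaptive, unbounded-length induction fully rigorous, rather than the single-query symmetry computation, which is clean. The conceptual point worth stressing is \emph{why} unlimited queries are powerless: it is precisely the fresh internal sample per query that prevents the algorithm from ever obtaining two correlated evaluations of the same permutation. Were a single fixed $\pi$ reused across queries, then querying $x$ and then $\pi(x)$ would already reveal whether $\pi^2 = 1$, and the problem would become trivial; internal randomness is exactly what destroys this correlation.
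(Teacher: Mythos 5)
Your proposal is correct and follows essentially the same route as the paper's proof: the key observation in both is that for any control $x$, the value $\pi(x)$ is uniform on $[N]\setminus\{x\}$ whether $\pi$ is drawn from $\rI$ or $\rC$, and since the oracle draws a fresh $\pi$ at every query, no number of queries yields any information. Your conjugation-by-stabilizer argument and the explicit adaptive induction are just careful elaborations of what the paper states in two sentences, so there is nothing substantively different to compare.
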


\begin{proof}
In this problem, an oracle $\mathcal{O}_S$ is given which is either $\mathcal{O}_{\rI}$ or $\mathcal{O}_{\rC}$; the problem is to determine which of these is the case.  Consider querying the oracle when the control is $x$.  Then $\pi(x)$ is a uniformly random value in $[N] \setminus \{x\}$ for both cases so this problem cannot be solved by a classical algorithm.
\end{proof}

However, the problem can be solved by a quantum algorithm using a single query to the oracle as shown in Algorithm~\ref{distinguishing-involutions-from-cycles-algorithm}.

Our algorithm will make use of the swap test~\cite{buhrman2001a}.  Let $F$ denote the swap operator, i.e. $F\ket{\alpha,\beta} = \ket{\beta,\alpha}$.  Then the swap test is defined to be the measurement with outcomes $\{\frac{I+F}{2}, \frac{I-F}{2}\}$.  The $\frac{I+F}{2}$ outcome is called the ``symmetric'' outcome and $\frac{I-F}{2}$ the ``antisymmetric'' outcome.  A key property of the swap test is that when applied to pure states $\ket{\alpha}\ot \ket{\beta}$, it will output ``symmetric'' with probability $\frac{1  + |\braket{\alpha}{\beta}|^2}{2}$, thus providing an estimate of their overlap.

\begin{algorithm}
  \centering
  \begin{algorithmic}[1]
    \State Prepare the state $\frac{1}{\sqrt{N}} \sum_{x = 1}^N \ket{x}$
    \State Apply $\mathcal{O}_S$ to obtain the state $\frac{1}{\sqrt{N}} \sum_{x = 1}^N \ket{x, \pi(x)}$
    \State Apply the swap test to $\frac{1}{\sqrt{N}} \sum_{x = 1}^N \ket{x, \pi(x)}$
    \If{the swap test outputs ``symmetric''}
      \Return ``INV''
    \Else\ 
      \Return ``CYC''
    \EndIf
  \end{algorithmic}
  \caption{The quantum algorithm for distinguishing involutions with no fixed points from cycles}
  \label{distinguishing-involutions-from-cycles-algorithm}
\end{algorithm}

We now show that the above algorithm effectively counts the number of transpositions in an arbitrary permutation which is sufficient to distinguish involutions from cycles.

\begin{theorem}\label{thm:inv-quantum}
  Quantum algorithms can solve the problem of distinguishing cycles from involutions with no fixed points using a single query with one-sided error $1 / 2$.
\end{theorem}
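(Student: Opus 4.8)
The plan is to analyze Algorithm~\ref{distinguishing-involutions-from-cycles-algorithm} directly by computing the probability that the swap test returns the ``symmetric'' outcome in each case. Because the oracle has internal randomness, I would first record that the post-query state is not pure: conditioned on the oracle having drawn $\pi$, the register is in the pure state $\ket{\psi_\pi} = \frac{1}{\sqrt N}\sum_{x=1}^N \ket{x,\pi(x)}$, so the actual state is the density matrix $\rho_S = \frac{1}{|S|}\sum_{\pi\in S}\proj{\psi_\pi}$. Since the swap-test statistic is linear in $\rho_S$, the symmetric outcome occurs with probability
\begin{equation}
\tr\!\left(\frac{I+F}{2}\,\rho_S\right) = \frac{1}{|S|}\sum_{\pi\in S}\frac{1 + \bra{\psi_\pi}F\ket{\psi_\pi}}{2},
\end{equation}
so it suffices to evaluate the overlap $\bra{\psi_\pi}F\ket{\psi_\pi}$ for each individual permutation.

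The key calculation is exactly this overlap. Writing $F\ket{\psi_\pi} = \frac{1}{\sqrt N}\sum_x \ket{\pi(x),x}$ and pairing terms, the inner product becomes $\frac1N\sum_{x,x'}\braket{x}{\pi(x')}\braket{\pi(x)}{x'}$, where a pair $(x,x')$ contributes iff $x=\pi(x')$ and $\pi(x)=x'$; these force $x'=\pi(x)$ together with $\pi^2(x)=x$. I would thus show
\begin{equation}
\bra{\psi_\pi}F\ket{\psi_\pi} = \frac{1}{N}\,\bigl|\setb{x\in[N]}{\pi^2(x)=x}\bigr|,
\end{equation}
i.e.\ the overlap equals the fraction of points lying in cycles of $\pi$ of length one or two. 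This is the precise form of the earlier remark that the algorithm ``counts transpositions'': the symmetric amplitude is governed entirely by how many points $\pi$ fixes or swaps.

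With this identity the two cases are immediate. For $\pi \in \rI$ we have $\pi^2 = 1$, so every $x$ satisfies $\pi^2(x)=x$, the overlap is $1$, and the symmetric outcome occurs with probability $1$; the algorithm always answers ``INV''. For $\pi \in \rC$ an $N$-cycle with $N>2$, the permutation $\pi^2$ has no fixed point (it is an $N$-cycle when $N$ is odd and a product of two $N/2$-cycles when $N$ is even), so the overlap is $0$ and the symmetric outcome occurs with probability exactly $\frac12$. Hence the test never errs on involutions and errs with probability $\frac12$ on cycles, which is one-sided error $\frac12$ as claimed.

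The part to get right—more a subtlety than a genuine obstacle—is handling the internal randomness. Because $\bra{\psi_\pi}F\ket{\psi_\pi}$ takes one fixed value on all of $\rI$ and another fixed value on all of $\rC$, the average in the first displayed equation collapses to the pure-state answer, and mixing over $\pi$ leaves both probabilities unchanged. I would also state explicitly the standing assumption that $N$ is even with $N>2$, which is what makes $\rI$ nonempty and forces the cycle overlap to vanish; for $N=2$ the two classes coincide and no separation is possible.
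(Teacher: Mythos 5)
Your proposal is correct and follows essentially the same route as the paper: apply the swap-test formula $\Pr(\text{symmetric}) = \frac{1+\tr(\rho F)}{2}$ to the post-query state, reduce the overlap to counting points with $\pi^2(x)=x$, and conclude probability $1$ for involutions versus $\frac12$ for cycles. Your explicit treatment of the mixture over $\pi$ and the standing assumption that $N$ is even with $N>2$ are refinements the paper leaves implicit, but the core calculation is identical.
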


\begin{proof}
  Consider a general state $\rho^{AB}$ on two identical systems $A$ and $B$.  Then applying the swap test to this system (where the swap exchanges $A$ and $B$) outputs $0$ (symmetric) with probability $\Pr(0) = \frac{1 + \tr{\rho^{AB} F}}{2}$.  Applying this formula to the state $\frac{1}{\sqrt{N}} \sum_{x = 1}^N \ket{x, \pi(x)}$, the probability of observing $0$ is
  
  \begin{align}
    \Pr(0) & = \frac{1 + (1 / N) \sum_{x y} \braket{\pi(y)}{x} \braket{y}{\pi(x)}}{2} \\
        {} & = \frac{1 + (1 / N) \abs{\setb{(x, y)}{\pi(x) = y \text{ and } \pi(y) = x}}}{2}
  \end{align}
  This probability is $1 / 2$ if $\pi \in \rC$ and is $1$ if $\pi \in \rI$.
\end{proof}

Hence, there is an infinity-vs-one separation in the unbounded-error classical and quantum query complexities for this problem.  This analysis can also be applied to obtain an algorithm for estimating the number of transpositions in any permutation.


\subsection{An infinity-vs-$\Theta(n)$ separation for a modification of Simon's problem}
We now show how to modify Simon's problem~\cite{simon1994a} to obtain an infinity-vs-$\Theta(n)$ separation between the classical and quantum query complexities.  Recall that for Simon's problem, we are given oracle access to a function $f : \mathbb{Z}_2^n \rightarrow \mathbb{Z}_2^n$ and $f(x) = f(y)$ if and only if $x = y + a$ for some fixed element $a \in \mathbb{Z}_2^n$ and our task is to determine $a$.  Classically, exponentially many queries are required; however, quantumly at each step we learn a vector that is orthogonal to $a$ so that the expected number of queries required is $\Theta(n)$.  The crucial point here is that this algorithm will return a vector orthogonal to $a$ for any $f$ that is constant and distinct on the cosets $\{x, x + a\}$, so if $f$ changes between calls to the oracle and $a$ does not, then the quantum algorithm will not be affected. 

Our randomized oracle is defined as follows.   Fix some unknown $a \in \mathbb{Z}_2^n$. Then construct an oracle $\mathcal{O}_a : \ket{x}\ket{y} \mapsto \ket{x}\ket{y + f(x)}$ where $f : \mathbb{Z}_2^n \rightarrow \mathbb{Z}_2^n$ is selected uniformly at random at each call subject to the constraint that $f(x) = f(y)$ if and only if $x = y + a$.  The problem is then to determine $a$.  

Classically, this cannot be done since each query to the oracle results in a random number; however, the quantum algorithm still requires only $\Theta(n)$ queries.

\subsection{An infinity-vs-one separation for the hidden linear structure problem}
Beaudrap, Cleve and Watrous~\cite{beaudrap2001a} introduced the hidden linear structure problem where we are given a blackbox that performs the mapping $\ket{x}\ket{y} \mapsto \ket{x}\ket{\pi(y + sx)}$ where $\pi \in S_q$ and $s \in GF(q)$ for $q = 2^n$.  The problem is to find $s$.  By extending quantum Fourier transforms to $GF(q)$, Beaudrap, Cleve and Watrous~\cite{beaudrap2001a} show that this problem can be solved exactly using a single quantum query but classical algorithms require $\Omega(\sqrt{q})$ queries to determine $s$.   They are able to achieve such a query complexity separation by using a non-standard (but still deterministic) oracle model.  A similar separation ($O(1)$ vs $\Omega(N^{1/4})$) was achieved in the standard oracle model by Aaronson~\cite{aaronson2010}.

We propose the following randomized variant of their oracle problem.  Fix some (unknown) $s \in GF(q)$.  Then define the oracle by $\mathcal{O}_s : \ket{x}\ket{y} \mapsto \ket{x}\ket{\pi(y + sx)}$ where $\pi$ is selected uniformly at random for each query.  The goal is still to determine $s$.  Since the quantum algorithm only uses one query it is unaffected by this change; however, classically the output of the oracle is completely random at each query so we obtain an infinity-vs-one separation.

The three separations shown are examples of a more general phenomenon in which randomness can be used to amplify a modest quantum-vs-classical query separation into an unbounded one.  We discuss this further in Section~\ref{amplification}.

\section{Uselessness for oracles with internal randomness}
\label{uselessness}
We now turn to the general problem of when some number of queries are {\em useless} for solving an oracle problem.  Equivalently we can ask when it is possible to answer an oracle problem with any positive advantage over guessing. 

To define oracle problems, we use a slightly more compact notation than in previous sections.  An oracle ${\pi}$ is defined by a collection of permutations $\pi_{x,r}\in S_M$, where $x\in [N]$ is input by the algorithm, and $r$ is the internal randomness which is distributed according to $R$.  Overloading notation, we say that if the oracle is queried $k$ times, then $\bm{r}=(r_1,\ldots,r_k)$ is distributed according to $R^k$, which may not necessarily be an i.i.d. distribution.  

To describe the problem we want to solve, we follow the notation of Meyer and Pommersheim~\cite{meyer2009a} while adding internal randomness to the oracle.  We are promised that our oracle belongs to a set $C$, which in general may be a strict subset of all functions from $[N]\times\supp(R)$ to $S_M$.  The set $C$ is partitioned into sets $\{C_j\}$, and our goal is to determine which $C_j$ contains ${\pi}$.  By an abuse of notation, we say that $C$ is our oracle problem.  Queries are made to an oracle $\mathcal{O}_\pi$ which acts by $\ket{x}\ket{y} \mapsto \ket{x}\ket{\pi_{x,r}(y)}$.

The oracle problem $C$ is a worst-case problem for which we demand that algorithms work well for all choices of $\pi\in C$.  However, we also consider average-case problems in which $\pi$ is distributed according to a known distribution $\mu$.  The resulting oracle problem is denoted $(C,\mu)$.   


Before stating our own results, we describe the main result of~\cite{meyer2010a}.  In their model there is no internal randomness, so the action of the oracle is simply $\pi_x\in S_M$ for each $x\in [N]$.  If $\bm{x} = (x_1,\ldots,x_k)$ and $\bm{y}=(y_1,\ldots,y_k)$, then define $\pi_\bx(\by) = (\pi_{x_1}(y_1),\ldots,\pi_{x_k}(y_k))$.
Their result may then be stated as follows.

\begin{definition}[Classical uselessness\cite{meyer2010a}]
  $k$ classical queries are useless for the oracle problem $(C,\mu)$ if for all $\bm{x} \in [N]^k$, $\bm{y} \in [M]^k$, $\bz\in[M]^k$ and $j$, $\Pr(\pi \in C_j \mid \pi_\bx(\by)=\bz) = \Pr(\pi \in C_j)$, where $\pi$ is distributed according to $\mu$.
\end{definition}

\begin{definition}[Quantum uselessness\cite{meyer2010a}]
  $k$ quantum queries are useless for the oracle problem $(C,\mu)$ if for any $k$-query quantum algorithm run on any initial state and any POVM measurement $\{M_s\}$ which is made on the output of the algorithm, $\Pr(\pi \in C_j \mid s) = \Pr(\pi \in C_j)$ for all $j$ and $s$, where $\pi$ is distributed according to $\mu$.
\end{definition}

We pause to briefly comment on the connection to unbounded-error query complexity.  Unbounded-error query complexity typically refers to binary problems, i.e. when $C$ is partitioned into $C_0,C_1$ and the goal is to determine which one $\pi$ belongs to with success probability $>1/2$.  In this case, the statement that $k$ (quantum or classical) queries are useless for $(C,\mu)$ (for some $\mu$) is equivalent to the unbounded-error query complexity of $C$ being $> k$.  This is stated precisely and proved in Appendix~\ref{app:unbounded}.

The main result of~\cite{meyer2010a} is the following theorem.
\begin{theorem}[Classical uselessness implies quantum uselessness\cite{meyer2010a}]
  \label{uselessness-theorem}
  For any deterministic oracle problem $(C,\mu)$, if $2 k$ classical queries are useless then $k$ quantum queries  are useless.
\end{theorem}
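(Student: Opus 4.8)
The plan is to show that $k$ quantum queries reveal nothing about the index $j$ by reducing the question to a statement about a single averaged density operator, and then to expand that operator in the computational basis so that classical $2k$-query uselessness can be applied term by term.

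First I would fix an arbitrary $k$-query quantum algorithm. By deferring all intermediate measurements, its action on an initial state $\ket{\psi_0}$ (on the control, target, and workspace registers) can be written as $\ket{\psi_\pi} = U_k \mathcal{O}_\pi U_{k-1} \cdots U_1 \mathcal{O}_\pi U_0 \ket{\psi_0}$ for fixed unitaries $U_0,\ldots,U_k$ independent of $\pi$, followed by a POVM $\{M_s\}$. The outcome probability given $\pi$ is $\Pr(s \mid \pi) = \tr(M_s \proj{\psi_\pi})$. Writing $\rho_j = \expect_{\pi}[\proj{\psi_\pi} \mid \pi \in C_j]$ for the average over $\mu$ conditioned on $C_j$, Bayes' rule shows that quantum uselessness---namely $\Pr(\pi \in C_j \mid s) = \Pr(\pi \in C_j)$ for all $j,s$---is equivalent to $\tr(M_s \rho_j)$ being independent of $j$. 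Hence it suffices to prove that $\rho_j$ itself is independent of $j$, since this delivers the conclusion for every POVM and every algorithm at once.

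The heart of the argument is to expose the $\pi$-dependence of $\proj{\psi_\pi}$. In the computational basis the oracle has matrix elements $\bra{x',y'}\mathcal{O}_\pi\ket{x,y} = \delta_{x,x'}\,[\pi_x(y)=y']$, with the workspace carried along untouched. Expanding $\ket{\psi_\pi}$ by inserting a resolution of the identity before and after each of the $k$ oracle calls, every basis amplitude of $\ket{\psi_\pi}$ becomes a sum over intermediate paths, each contributing a $\pi$-independent product of entries of the $U_i$ times a product of $k$ indicators $\prod_{i=1}^{k}[\pi_{x_i}(y_i)=y_i']$. Forming $\proj{\psi_\pi} = \ket{\psi_\pi}\bra{\psi_\pi}$ multiplies the $k$ (real) indicators coming from the ket with the $k$ indicators coming from the bra, so each matrix entry of $\proj{\psi_\pi}$ is a $\pi$-independent linear combination of products of exactly $2k$ indicators of the form $[\pi_{a}(b)=c]$.

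Finally I would invoke classical uselessness. A product $\prod_{i=1}^{2k}[\pi_{a_i}(b_i)=c_i]$ is precisely the indicator of the event $\pi_{\bm{a}}(\bm{b})=\bm{c}$ probed by a classical algorithm making the $2k$ queries $(a_i,b_i)$; by Bayes, uselessness of $2k$ classical queries is exactly the statement that $\expect_{\pi}[\,\prod_{i=1}^{2k}[\pi_{a_i}(b_i)=c_i] \mid \pi \in C_j\,]$ does not depend on $j$ (inconsistent or repeated constraints are harmless, since both sides then vanish or collapse to fewer distinct constraints). Applying this to each term of each matrix entry of $\rho_j$ shows that every entry---and hence $\rho_j$ itself---is independent of $j$, which completes the proof. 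I expect the main work to lie in the bookkeeping of the middle step: carefully tracking the controls and the pre-query and post-query target values through the alternation of unitaries and oracle calls, so that the ket contributes exactly $k$ constraints and the bra another $k$, matching the $2k$ classical queries. The conceptual point---that the ``$2$'' in $2k$ arises from the bra--ket structure of a density matrix---is what makes the factor of two natural rather than mysterious.
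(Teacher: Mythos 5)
Your proof is correct, and it is essentially the argument the paper itself gives in its appendix on ``Alternate proofs of uselessness'': there, too, the final density matrix is expanded so that its dependence on $\pi$ enters only through terms $Q_I(\pi_{\bm{x}}(\bm{y}),\pi_{\bm{x}'}(\bm{y}'))$---$k$ oracle evaluations contributed by the ket and $k$ by the bra---and the $2k$-query classical condition (converted via Bayes' rule) is applied term by term, exactly matching your expansion into $\pi$-independent combinations of products of $2k$ indicators $[\pi_a(b)=c]$. However, the paper's \emph{primary} proof of this theorem (as the deterministic special case of Theorem~\ref{randomized-uselessness-theorem}) takes a genuinely different route: it encodes the oracle into a Choi--Jamio\l{}kowski-like state (Definition~\ref{def:det-encoding}), shows that one query creates one encoding and one encoding simulates one query with exponentially small but heralded success probability (Theorems~\ref{thm:encoding} and \ref{thm:cor-encoding}), concludes that in the zero-advantage regime a $k$-query algorithm may WLOG just prepare $\rho^k_\pi$ and measure (Corollary~\ref{cor:k-useless}), and then reads off the classical condition from the matrix elements of $\sigma_j=\E_{\pi\in C_j}\rho^k_\pi$. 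Your route is more elementary and self-contained, and it makes the factor of two conceptually transparent (the bra and ket each contribute $k$ constraints); the encoding route buys more: it immediately yields the converse direction (quantum uselessness implies classical uselessness of $2k$ queries), it extends without extra work to oracles with internal randomness---where your termwise argument must be redone with the random seeds held fixed, which is precisely what the appendix version does---and it provides the operational tool reused in Section~\ref{amplification} to amplify separations. Two points you treat briefly but correctly, and which deserve to stay explicit in any write-up: the indicators are real, so the bra side contributes indicators rather than conjugates, and degenerate tuples of constraints (repeated or mutually inconsistent) are either legitimate instances of the $2k$-query condition or vanish identically for every $j$, so no case is lost.
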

We will give an alternate proof of this theorem, establish a converse, and generalize it to oracles with internal randomness.

\subsection{Definitions of Classical Uselessness}
In order to characterize uselessness for oracles with internal randomness, we first need to extend the definitions to this case.   As above, we define $\pi_{\bx,\br}(\by) = (\pi_{x_1,r_1}(y_1), \ldots \pi_{x_k,r_k}(y_k))$.  One natural definition of uselessness in this setting is that a classical algorithm ignorant of the oracle's internal randomness should not be able to gain any nontrivial advantage in learning which $C_j$ contains $\pi$.

\begin{definition}[Weak classical uselessness]
  If $(C,\mu)$ is an oracle problem, then $k$ classical queries are weakly useless if for all $\bx\in[N]^k$, $\by,\bz\in[M]^k$ and $j$, $\Pr(\pi \in C_j \mid \pi_{\bx,\br}(\by) =\bz) = \Pr(\pi \in C_j)$, where $\pi$ and $\br$ are distributed according to $\mu$ and $R^k$.
\end{definition}

It is easy to see that if $2 k$ classical queries are weakly useless then $k$ quantum queries need not be useless since Algorithm~\ref{distinguishing-involutions-from-cycles-algorithm} is a counterexample.  A much stronger definition of uselessness would be to allow the classical algorithm to see, or equivalently to choose, the internal random bits used by the oracle.

\begin{definition}[Strong classical uselessness]
  If $(C,\mu)$ is an oracle problem, then  $k$ classical queries are strongly useless if for all $\bm{x} \in [N]^k$, $\bm{y},\bz \in [M]^k$ and all possible values $\bm{r} \in \supp({R^k})$,

  \begin{align}
    \Pr(\pi \in C_j \mid \pi_{\bx,\br}(\by)=\bz) & = \Pr(\pi \in C_j)
  \end{align}
  for all $j$, where $\pi$ is distributed according to $\mu$.
\end{definition}

We will see later that strong classical uselessness for $2 k$ queries is sufficiently powerful to imply quantum uselessness for $k$ queries.  
  However, it is in fact too strong.  A necessary and sufficient condition will come from having each of the $k$ pairs share a seed.

\begin{definition}[Pairwise classical uselessness]
  \label{def:pairwise}
  If $(C,\mu)$ is an oracle problem, then $2 k$ classical queries are pairwise useless if for all $\bm{x}, \bm{x}^\prime \in [N]^k$, $\bm{y}, \bm{y}^\prime,\bz,\bz' \in [M]^k$ and $j$, $\Pr(\pi \in C_j \mid \pi_{\bx,\br}(\by)=\bz, \pi_{\bx',\br}(\by')=\bz') = \Pr(\pi \in C_j)$, where $\pi$ and $\br$ are distributed according to $\mu$ and $R^k$.
\end{definition}

This definition ensures that each pair of query
values $(x_i, x_i^\prime)$ shares the same random seed $r_i$.
We will see later that this corresponds precisely (in the unbounded error setting) to the power of quantum queries, because the density matrix resulting from a quantum query depends on only one random seed, while the different row and column indices interrogate two different choices of $x,y$.

It is important to note that weak classical uselessness and pairwise classical uselessness are not comparable: there exist problems that satisfy weak classical uselessness but not pairwise classical uselessness and vice versa.  Section~\ref{distinguishing-involutions-from-cycles-problem} gives an example where two classical queries are weakly useless but not pairwise useless.  For an example of a problem where two classical queries are not weakly useless but are pairwise useless, let $C$ be the set of all balanced binary functions on $\{0, 1\}$ and let $f$ be chosen uniformly at random from $C$.  Consider the task of determining the function implemented by the oracle that acts for the \nth{i} query by $\ket{x} \mapsto \ket{x \oplus f(r_i)}$ where $r_i$ is the \nth{i} random seed; let $r_1$ be uniformly distributed in $\{0, 1\}$ and let $r_i = 0$ for $i \geq 2$.  Clearly, two classical queries with the random seeds $r_1$ and $r_2$ determine $f$.  However, two classical queries that share the random seed $r_1$ yield no useful information.

It is easy to show that uselessness does not depend on the distribution $\mu(\pi \in C_j)$ over the classes provided the probability of each class is positive.  However, it does depend on the conditional distribution of the oracle within each class.  Consider the problem of determining the parity of a binary function $f : [N] \rightarrow \{0, 1\}$; by tweaking the conditional distribution of $f$ for each parity, we can cause $f(1)$ to be equal to the parity of $f$ with high probability so a single query to $f$ wouldn't be useless.  On the other hand, if the conditional distribution for $f$ were uniform, $N - 1$ classical queries would be useless.



\subsection{Uselessness results}
Our main result in this section is the following equivalence:
\begin{theorem}
  \label{randomized-uselessness-theorem}
For any oracle problem $(C,\mu)$, $k$ quantum queries are useless if and only if 
$2k$ classical queries are pairwise useless.
\end{theorem}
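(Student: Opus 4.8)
The plan is to reduce both directions of the equivalence to a single structural fact about the state produced by a $k$-query quantum algorithm. Fix any such algorithm: an initial state, fixed unitaries $U_0,\ldots,U_k$ acting on control, target and an arbitrary workspace, interleaved with the $k$ oracle calls, where call $i$ applies $O_{\pi,r_i}$ with seed $r_i$ and $\br=(r_1,\ldots,r_k)\sim R^k$. The crucial observation is that each $O_{\pi,r_i}$ is a permutation in the computational basis, so for fixed $\br$ the output amplitude $\braket{a}{\psi_{\pi,\br}}$ is a sum, over computational-basis paths ending at $\ket a$, of products of matrix elements of the (oracle-independent) unitaries, restricted to those paths whose $i$-th oracle step is consistent with $\pi_{x_i,r_i}$. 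Writing the conditional density matrix $\rho_\pi=\E_{\br}\proj{\psi_{\pi,\br}}$ and expanding, the entry $\bra a\rho_\pi\ket b$ becomes a sum over pairs of paths $(p,p')$ of a fixed coefficient times $\Pr_{\br}(\pi_{\bx,\br}(\by)=\bz \text{ and } \pi_{\bx',\br}(\by')=\bz')$, where $(\bx,\by,\bz)$ is read off from the ket-path $p$ and $(\bx',\by',\bz')$ from the bra-path $p'$, and both share the same seed vector $\br$. This is exactly the pairwise quantity of Definition~\ref{def:pairwise}: the row and column indices interrogate two independent input/output choices that nonetheless use a common seed in each coordinate. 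Averaging over $\pi\in C_j$, every entry of $\rho^{(j)}:=\E_\pi[\rho_\pi\mid\pi\in C_j]$ is therefore a fixed linear combination of the conditional pairwise probabilities $\Pr(\pi_{\bx,\br}(\by)=\bz,\pi_{\bx',\br}(\by')=\bz'\mid \pi\in C_j)$.

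The forward direction is then immediate. If $2k$ classical queries are pairwise useless, then by Bayes each such conditional pairwise probability is independent of $j$, so every entry of $\rho^{(j)}$ is independent of $j$; hence for any POVM $\{M_s\}$ the outcome law $\Pr(s\mid \pi\in C_j)=\tr(M_s\rho^{(j)})$ is independent of $j$, which by Bayes gives $\Pr(\pi\in C_j\mid s)=\Pr(\pi\in C_j)$, i.e.\ $k$ quantum queries are useless.

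For the converse I would show that each individual pairwise probability can be isolated as a measurement statistic of some $k$-query algorithm, so that quantum uselessness forces it to be independent of $j$. Fix $\bx,\bx',\by,\by',\bz,\bz'$. Using one branch qubit $B$ initialized to $\tfrac1{\sqrt2}(\ket0+\ket1)$, I would, before query $i$ and using the reversible workspace operations $U_{i-1}$, coherently load $(x_i,y_i)$ into the control/target when $B=0$ and $(x_i',y_i')$ when $B=1$, apply the $i$-th oracle call, and store the returned target value into a fresh workspace slot. After $k$ queries the state is $\tfrac1{\sqrt2}(\ket0_B\ket{\bx,\by}\ket{\pi_{\bx,\br}(\by)}+\ket1_B\ket{\bx',\by'}\ket{\pi_{\bx',\br}(\by')})$, and since both branches pass through the same physical call $i$, they share the seed $r_i$. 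Averaging over $\br$, the off-diagonal entry of the resulting density matrix between $\ket a=\ket0_B\ket{\bx,\by}\ket\bz$ and $\ket b=\ket1_B\ket{\bx',\by'}\ket{\bz'}$ is exactly $\tfrac12\Pr_{\br}(\pi_{\bx,\br}(\by)=\bz,\pi_{\bx',\br}(\by')=\bz')$, a real number because all amplitudes are real and the oracle only permutes basis states. Measuring in a basis containing $\tfrac1{\sqrt2}(\ket a\pm\ket b)$, the difference of the two outcome probabilities equals this pairwise probability; this is the ``heralded retrieval'' of a pair of shared-seed classical queries from a single quantum query. Conditioning on $\pi\in C_j$ and invoking quantum uselessness (both outcome probabilities are independent of $j$), the difference is independent of $j$, so $\Pr(\pi_{\bx,\br}(\by)=\bz,\pi_{\bx',\br}(\by')=\bz'\mid\pi\in C_j)$ is independent of $j$; by Bayes this is pairwise uselessness, and since the tuple was arbitrary we are done.

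I expect the main obstacle to be establishing the structural lemma cleanly in full generality: carefully bookkeeping the path-sum so that arbitrary intermediate unitaries and workspace introduce no $\pi$-dependence beyond the shared-seed pairwise probabilities, and verifying that the common seed in each coordinate is correctly shared between the ket and bra paths under a possibly non-i.i.d.\ $R^k$. The remaining care is routine: the Bayes inversions require the conditioning events to have positive probability (handled by noting that outcomes of probability zero are vacuous and that each $\Pr(\pi\in C_j)>0$), and the loading/storing gadget in the converse must be unitary and consume exactly one oracle call per coordinate, so that the whole construction uses exactly $k$ quantum queries.
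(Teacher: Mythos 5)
Your proof is correct, and it reaches Theorem~\ref{randomized-uselessness-theorem} by a route that is genuinely different from the paper's main argument. The paper proceeds by encoding the oracle into the state $\rho^k_\pi=\E_{\br\sim R^k}[\psi_{\pi,r_1}\ot\cdots\ot\psi_{\pi,r_k}]$ and proving an operational equivalence (Theorem~\ref{thm:cor-encoding}): $k$ queries create one copy of $\rho^k_\pi$, and one copy of $\rho^k_\pi$ simulates $k$ queries with heralded success probability $1/N^kM^{2k}$; this yields Corollary~\ref{cor:k-useless} ($k$ quantum queries are useless iff all $\sigma_j=\E_{\pi\in C_j}\rho^k_\pi$ coincide), and the theorem follows because the quantities in Definition~\ref{def:pairwise} are exactly the matrix elements of $\sigma_j$. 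You never use the heralded retrieval, which is the paper's central gadget. Your forward direction instead expands the output density matrix of an \emph{arbitrary} $k$-query algorithm as a path sum with oracle-independent coefficients, so that its entries are fixed linear combinations of shared-seed pairwise probabilities; this is essentially the paper's own ``alternate proof'' given in the appendix. Your converse is the more substantive departure: rather than preparing the full uniform-superposition encoding and appealing to distinguishability of unequal $\sigma_j$'s, you prepare a two-branch superposition targeted at a single tuple, so the relevant off-diagonal entry is $\tfrac12\Pr(\pi_{\bx,\br}(\by)=\bz,\,\pi_{\bx',\br}(\by')=\bz')$ rather than the $(NM)^{-k}$-weighted entry of $\rho^k_\pi$, and you extract it as a difference of two outcome probabilities in a $\pm$-basis measurement, using that all amplitudes are real. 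The underlying insight is the same as the paper's (one quantum query couples two classical transcripts through a single shared seed in the row and column indices), but your version is more elementary and self-contained: it needs no postselection or failure flag, and unlike the appendix's converse---which is restricted to oracles acting by XOR---it handles general permutation oracles. What you give up is reusability: the encoding/retrieval equivalence is what powers the amplification results (Theorems~\ref{thm:gen-classical-uselessness} and~\ref{thm:gen-quantum-uselessness}), which your argument alone would not provide. Your attention to the edge cases (zero-probability conditioning events, positive class priors, unitarity of the load/store gadget and that it consumes exactly one oracle call per coordinate) is the right bookkeeping and fills in points the paper leaves implicit.
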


For deterministic oracles, weak, pairwise and strong classical
uselessness are all the same.  In this case,
Theorem~\ref{randomized-uselessness-theorem} can be simplified to the following strengthening of Theorem~\ref{uselessness-theorem}.
\begin{corollary}
  \label{deterministic-uselessness-corollary}
  For any deterministic oracle problem $(C, \mu)$, $k$ quantum queries are useless if and only if 
$2k$ classical queries are useless.
\end{corollary}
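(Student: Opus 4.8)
The plan is to deduce the corollary directly from Theorem~\ref{randomized-uselessness-theorem} by observing that it is precisely the specialization of that theorem to the deterministic setting. Theorem~\ref{randomized-uselessness-theorem} already supplies the equivalence ``$k$ quantum queries useless $\Leftrightarrow$ $2k$ classical queries pairwise useless'' for an arbitrary oracle problem $(C,\mu)$, and a deterministic oracle problem is simply the special case in which the internal randomness $R$ is a point mass. The only thing left to check, therefore, is that for such oracles the notion of pairwise classical uselessness for $2k$ queries (Definition~\ref{def:pairwise}) collapses to the ordinary classical uselessness of Meyer and Pommersheim for $2k$ queries.

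To verify this collapse, I would start from the fact that when $R$ is trivial there is a single value $\br \in \supp(R^k)$, so that $\pi_{\bx,\br}(\by) = \pi_\bx(\by)$ and the shared-seed requirement built into Definition~\ref{def:pairwise} becomes vacuous. The pairwise condition then reads: for all $\bx,\bx' \in [N]^k$ and $\by,\by',\bz,\bz' \in [M]^k$, $\Pr(\pi \in C_j \mid \pi_\bx(\by)=\bz,\ \pi_{\bx'}(\by')=\bz') = \Pr(\pi \in C_j)$. Concatenating the two $k$-tuples into $\hat\bx=(\bx,\bx')\in[N]^{2k}$, $\hat\by=(\by,\by')$, and $\hat\bz=(\bz,\bz')$, the joint event $\{\pi_\bx(\by)=\bz,\ \pi_{\bx'}(\by')=\bz'\}$ is literally the event $\{\pi_{\hat\bx}(\hat\by)=\hat\bz\}$. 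As $(\hat\bx,\hat\by,\hat\bz)$ ranges over all of $[N]^{2k}\times[M]^{2k}\times[M]^{2k}$ exactly as $(\bx,\bx',\by,\by',\bz,\bz')$ does, the pairwise condition for $2k$ queries is word-for-word the ordinary classical-uselessness condition for $2k$ queries. This establishes the desired equivalence in both directions simultaneously, and chaining it with Theorem~\ref{randomized-uselessness-theorem} yields the corollary.

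I do not expect a genuine obstacle here, since the argument is a bookkeeping reduction rather than a new estimate; the only point requiring care is making the meaning of ``deterministic'' precise, namely that it corresponds to a trivial $R$ so that the seed-coupling which distinguishes pairwise from weak and strong uselessness disappears. Once that is pinned down, the same observation also shows that weak, pairwise, and strong classical uselessness coincide for deterministic oracles (all three reduce to the Meyer--Pommersheim condition), which is the remark used in the surrounding text and confirms that no separate treatment of those notions is needed for the corollary.
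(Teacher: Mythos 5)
Your proposal is correct and follows essentially the same route as the paper: the paper also obtains the corollary by specializing Theorem~\ref{randomized-uselessness-theorem}, noting (without spelling out the concatenation bookkeeping you supply) that for deterministic oracles the seed-sharing in Definition~\ref{def:pairwise} is vacuous, so weak, pairwise, and strong classical uselessness all collapse to the Meyer--Pommersheim notion for $2k$ queries. Your explicit identification of the pairwise condition on two $k$-tuples with the ordinary condition on a single $2k$-tuple is exactly the implicit content of the paper's remark preceding the corollary.
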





\subsection{Encoding oracles in states}
In this section we will prove Theorem~\ref{randomized-uselessness-theorem}. Our strategy will be to show that in the unbounded-error setting, the optimal algorithms make a series of fixed queries and then measure the resulting states.  The key ingredient is to show that oracles can be encoded in states in a way that is perfectly efficient in terms of queries (i.e. one oracle call creates one state, and one state simulates one oracle call), albeit at a cost of producing the output ``I don't know'' most of the time.  We define these encodings first for deterministic oracles.

\begin{definition}\label{def:det-encoding}
Let $\cO_\pi$ be a deterministic permutation oracle that maps $\ket{x,y}\in\bbC^N \ot \bbC^M$  to $\ket{x,\pi_{x}(y)}$.  Then define the {\em encoding} of $\pi$ to be $\ket{\psi_{\pi}} = \frac{1}{\sqrt{NM}} \sum_{x\in [N],y\in [M]} \ket{x}^X \ket{y}^Y \ket{\pi_x(y)}^Z$.  Here $X,Y,Z$ label different registers for notational convenience.
\end{definition}

Clearly one use of $\cO_{\pi}$ allows the creation of one copy of $\ket{\psi_{\pi}}$; simply prepare the state $\frac{1}{\sqrt{NM}}\sum_{x,y} \ket{x}^X\ket{y}^Y\ket{y}^Z$ and apply $\cO_\pi$ to registers $XZ$.  We will see shortly that one copy of $\ket{\psi_{\pi}}$ can in turn simulate one use of $\cO_\pi$, albeit with a very high, but heralded, failure probability.  Before proving this result, we show how \defref{det-encoding} generalizes to oracles with internal randomness.

\begin{definition}
  Let $\cO_\pi$ be an oracle whose action is defined by $\cO_\pi(\ket{x}\bra{x'} \ot \ket{y}\bra{y'} = \E_{r\sim R} \ket{x}\bra{x'} \ot \ket{\pi_{x,r}(y)}\bra{\pi_{x',r}(y')}$.  For each $r$, define the deterministic oracle $\cO_{\pi,r}$ by $\cO_{\pi,r}\ket{x,y} = \ket{x,\pi_{x,r}(y)}$ and define the encoding for fixed $r$ to be $\ket{\psi_{\pi,r}} = \frac{1}{\sqrt{NM}} \sum_{x\in [N], y\in [M]} \ket{x}^X \ket{y}^Y \ket{\pi_{x,r}(y)}^Z$.
\end{definition}

Now we define encodings of oracles with randomness.

\begin{definition}\label{def:rnd-encoding}
  If $\cO_\pi$ is an oracle with internal randomness, then define the {\em encoding} of $\cO_\pi$ to be $\rho_{\pi} = \E_r \psi_{\pi,r}$.
\end{definition}

In this last definition, we use the convention that $\psi =\proj\psi$.  The utility of considering encodings comes from the following operational equivalence.

\begin{theorem}\label{thm:encoding}\ \\[-2ex]
\benum \item One use of $\cO_\pi$ can create one copy of $\rho_{\pi}$.
\item It is possible to consume one copy of $\rho_{\pi}$ and simulate $\cO_\pi$ with success probability $1/NM^2$.  The simulation outputs a classical flag indicating success or failure.
\eenum
In both cases, the run time required is linear in the number of qubits, i.e. $O(\log NM)$.
\end{theorem}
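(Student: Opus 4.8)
The plan is to prove the two directions separately, with the second (decoding) being an oracle analogue of gate teleportation. Part 1 is essentially the construction already sketched after \defref{det-encoding}, now carried through the internal randomness. I would prepare $\frac{1}{\sqrt{NM}}\sum_{x,y}\ket{x}^X\ket{y}^Y\ket{y}^Z$ and apply a single query of $\cO_\pi$ using $X$ as control and $Z$ as target. For each fixed seed $r$ the oracle enacts $\cO_{\pi,r}$, which sends this state to $\ket{\psi_{\pi,r}}$; since the seed is drawn from $R$ and then left in the oracle, the resulting density matrix is $\E_r \psi_{\pi,r} = \rho_\pi$ by \defref{rnd-encoding}. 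This uses one query and $O(\log NM)$ gates.

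For Part 2, I would first treat a fixed deterministic encoding $\ket{\psi_{\pi,r}}$ and show how to consume it to apply $\cO_{\pi,r}$ to an arbitrary input $\ket{\phi}$ carried on a control register $A$ (dimension $N$) and target register $B$ (dimension $M$), possibly entangled with a workspace $W$. Grouping the encoding by its control value, it equals $\frac{1}{\sqrt N}\sum_x \ket{x}^X\ot(I^Y\ot\Pi_{x,r}^Z)\ket{\Phi}^{YZ}$, where $\ket{\Phi}=\frac{1}{\sqrt M}\sum_y\ket{y}\ket{y}$ and $\Pi_{x,r}$ is the permutation matrix of $\pi_{x,r}$. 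The decoding performs two commuting measurements on disjoint registers: on $BY$ the Bell-type measurement $\{\proj{\Phi}, I-\proj{\Phi}\}$, which teleports the target through the permutation exactly as in gate teleportation, and on $AX$ the measurement $\{P_{=}, I-P_{=}\}$ with $P_{=}=\sum_x \proj{x}^A\ot\proj{x}^X$, which coherently checks that the control matches \emph{without} consuming it. A direct calculation shows that when both measurements report their first outcome, the unnormalized success branch on $AZW$ equals $\frac{1}{\sqrt{NM^2}}(\cO_{\pi,r}\ot I^W)\ket{\phi}$, after discarding $X$ (which is perfectly correlated with $A$ in the computational basis). Hence the flag succeeds with probability exactly $1/NM^2$ and, on success, the output is $\cO_{\pi,r}$ applied to the input.

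The key structural point, and the reason the argument extends to internal randomness, is that this success probability is $1/NM^2$ independently of both the seed $r$ and the input $\ket{\phi}$. I would then run the identical measurement procedure on the mixed resource $\rho_\pi=\E_r\psi_{\pi,r}$. By linearity, conditioning on the success flag yields the normalized state $\E_r \cO_{\pi,r}\proj{\phi}\cO_{\pi,r}^\dagger = \cO_\pi(\proj{\phi})$, which is precisely the oracle's action on the input; because the success weight $1/NM^2$ is identical for every $r$, averaging does not reweight the distribution $R$, so the simulation is faithful rather than merely heralded-with-bias. Repeating the computation with a general $\ket{\phi}$ on $ABW$ shows that entanglement with a workspace is handled automatically.

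The runtime claim follows because every primitive acts on $\log N + \log M$ qubits: preparing the maximally entangled state, the Bell-type measurement on $BY$, and the equality check $P_{=}$ on $AX$ are all implementable with $O(\log NM)$ elementary gates. I expect the main obstacle to be the bookkeeping that keeps the control register coherent: a naive Bell measurement on $AX$ would teleport the control away, simultaneously destroying the output control register and degrading the success probability to $1/N^2M^2$. The crucial idea is therefore to match the control via the non-destructive projection $P_{=}$ rather than a full Bell measurement, and then to verify that the $r$-independence of the success probability is exactly what lets the internal-randomness case reduce cleanly to the deterministic one.
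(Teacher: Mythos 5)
Your Part 1 and your overall teleportation-style strategy for Part 2 are the same as the paper's, which packages the control-side equality check and the target-side Bell projection into the single Kraus operator $A=\sum_x\ket{x}^{X}\bra{x,x}^{XX'}\ot\frac{1}{\sqrt M}\sum_y\bra{y,y}^{YY'}$. However, your final step --- ``discarding $X$ (which is perfectly correlated with $A$ in the computational basis)'' --- is a genuine error, and it is exactly where your construction departs from the paper's. After $P_=$ succeeds, the state on $AXZW$ is $\frac{1}{\sqrt{NM^2}}\sum_{x,y,w}\phi_{x,y,w}\ket{x}^A\ket{x}^X\ket{w}^W\ket{\pi_{x,r}(y)}^Z$: the register $X$ is \emph{entangled} with $A$, not in a product state with it. Tracing out $X$ therefore kills every coherence $\ket{x}\bra{x'}$ with $x\neq x'$ in the control, so what you have built simulates the control-dephased channel, not $\cO_{\pi,r}$. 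Since the oracle is defined to act as $\ket{x}\bra{x'}\ot\ket{y}\bra{y'}\mapsto\E_r\,\ket{x}\bra{x'}\ot\ket{\pi_{x,r}(y)}\bra{\pi_{x',r}(y')}$ --- i.e.\ it preserves control coherence --- this is not a faithful simulation, and the loss is fatal for the intended application (Corollary~\ref{cor:k-useless}): quantum algorithms such as Algorithm~\ref{distinguishing-involutions-from-cycles-algorithm} gain their advantage precisely from those off-diagonal control coherences. Note the asymmetry you missed: after the Bell projection on $BY$, those registers sit in the \emph{fixed} pure state $\ket{\Phi}$, unentangled with everything else, so discarding them is harmless; no analogous statement holds for $X$ after $P_=$.

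The fix is small and lands you exactly on the paper's construction. Instead of projecting with $P_==\sum_x\proj{x}^A\ot\proj{x}^X$ and then tracing out $X$, use the \emph{merge} operator $V=\sum_x\ket{x}^A\bra{x,x}^{AX}$; equivalently, after $P_=$ succeeds, apply the unitary $\ket{x}^A\ket{x'}^X\mapsto\ket{x}^A\ket{x'\ominus x}^X$ (subtraction mod $N$) so that $X$ is coherently reset to $\ket{0}$, and only then discard it. The operator $V\ot\frac{1}{\sqrt M}\sum_y\bra{y,y}^{BY}$ is precisely the paper's Kraus operator; one checks it satisfies $AA^\dagger=I_N$, so together with $\sqrt{I-A^\dagger A}$ it forms a valid instrument, success occurs with probability $1/NM^2$ independent of both the input and the seed $r$ (your observation, which is also the paper's, and which is indeed what makes the internal-randomness case reduce to the deterministic one), and conditioned on success the induced map is exactly $\cO_{\pi,r}$ with the control kept coherent. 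With that single correction, your argument coincides with the paper's proof.
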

We point out that in the simulation, failure destroys not only the
encoding, but also the state input to the oracle.  Nevertheless, this simulation is enough to distinguish the case when $k$ queries are useless from the case when they are not.

Additionally, \thmref{encoding} is stated implicitly in terms of a distribution $R$.  In the case of $k$ queries correlated according to $R^k$, we have the following variant:
\begin{theorem}\label{thm:cor-encoding}\ \\[-2ex]
\benum
\item $k$ uses of $\cO_\pi$ can create $\rho^k_\pi = \E_{\br\sim R^k} [\psi_{\pi,r_1} \ot \cdots \ot \psi_{\pi,r_k}]$.
\item It is possible to consume $\rho^k_\pi$ and simulate $k$ uses of $\cO_\pi$ with success probability $1/N^kM^{2k}$, again with a flag indicating success or failure.
\eenum
\end{theorem}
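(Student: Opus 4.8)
The plan is to bootstrap from the single-query statement \thmref{encoding}, applying it independently to each of the $k$ tensor factors while carefully tracking the correlation carried by $\br \sim R^k$. The only genuinely new ingredient beyond invoking \thmref{encoding} $k$ times is checking that the correlated distribution survives both the encoding and the heralded decoding intact.

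For part 1, I would run the construction from part 1 of \thmref{encoding} $k$ times in parallel. Prepare $k$ copies of $\frac{1}{\sqrt{NM}}\sum_{x,y}\ket{x}^{X_i}\ket{y}^{Y_i}\ket{y}^{Z_i}$ on fresh registers indexed by $i\in[k]$, and apply the $i$-th oracle use to registers $X_iZ_i$. By the model, the $k$ physical calls to $\cO_\pi$ jointly sample $\br=(r_1,\ldots,r_k)\sim R^k$, so conditioned on $\br$ the $i$-th call deterministically implements $\cO_{\pi,r_i}$ on its registers. Since the input is a pure product across the $k$ blocks, the conditional output is $\psi_{\pi,r_1}\ot\cdots\ot\psi_{\pi,r_k}$, and averaging over $\br\sim R^k$ yields exactly $\rho^k_\pi$ by \defref{rnd-encoding}. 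The correlation thus comes ``for free'' here, inherited directly from the oracle's joint sampling.

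For part 2, I would run the single-query simulation from part 2 of \thmref{encoding} independently on each of the $k$ tensor factors of $\rho^k_\pi$, pairing the $i$-th factor with the $i$-th oracle input and declaring overall success only when all $k$ heralding flags report success. The structural fact driving the argument is that, conditioned on a fixed $\br$, the encoding $\rho^k_\pi$ collapses to the pure product state $\psi_{\pi,r_1}\ot\cdots\ot\psi_{\pi,r_k}$; hence the $i$-th simulation consumes $\psi_{\pi,r_i}$ and, conditioned on its own flag, implements the deterministic oracle $\cO_{\pi,r_i}$. Because the $k$ simulations act on disjoint registers with disjoint ancillas, conditioned on $\br$ their flags are independent, and each succeeds with probability exactly $1/NM^2$; the joint success probability is therefore $(1/NM^2)^k = 1/N^kM^{2k}$.

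The main obstacle---and the place where I must be careful rather than merely iterate \thmref{encoding}---is verifying that heralding does not distort the correlated distribution $R^k$. Here I rely on the fact, which I would extract from the proof of \thmref{encoding}, that the single-query simulation succeeds with probability exactly $1/NM^2$ on each \emph{fixed} pure encoding $\psi_{\pi,r}$, not merely on the average $\rho_\pi=\E_r\psi_{\pi,r}$. Given this, the joint success probability equals the constant $1/N^kM^{2k}$ for \emph{every} value of $\br$, so conditioning on success leaves $R^k$ unchanged; averaging the conditional channel $\cO_{\pi,r_1}\ot\cdots\ot\cO_{\pi,r_k}$ over $\br\sim R^k$ then reproduces precisely the correlated action of $k$ uses of $\cO_\pi$. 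I would close by noting that $\br$-independence of the per-factor success probability is used essentially: were the single-query success probability to depend on $r$, the heralded simulation would implement $k$ uses of the oracle under a tilted distribution $\widetilde{R}^k$ rather than $R^k$.
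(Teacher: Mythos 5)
Your proposal is correct and follows essentially the same route as the paper: the paper proves \thmref{cor-encoding} directly (with \thmref{encoding} as the $k=1$ case) via the heralded Kraus-operator simulation whose success probability $1/NM^2$ is constant---independent of the input state, of $r$, and of $\pi$---and your bootstrap from \thmref{encoding} uses exactly this fact, conditioning on $\br$ to reduce to the pure product state $\psi_{\pi,r_1}\ot\cdots\ot\psi_{\pi,r_k}$. Your closing observation that the $r$-independence of the per-factor success probability is what prevents the heralding from tilting $R^k$ is precisely the content of the paper's remark that outcome $A$ ``occurs with probability $1/NM^2$ regardless of the input state.''
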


As a corollary, for correlated internal randomness in the unbounded-error scenario, we can permit algorithms to make the $k$ oracle calls in any order.  We will prove \thmref{cor-encoding} only, since it subsumes \thmref{encoding}.

\begin{proof}
To create $\rho^k_\pi$, we simply apply $\cO_\pi$ $k$ times to $\left(\frac{1}{\sqrt{NM}}\sum_{x,y} \ket{x}^X\ket{y}^Y\ket{y}^Z\right)^{\ot k}$.

For the second reduction, suppose we are given a copy of $\rho^k_\pi$ and would like to apply $\cO_\pi$ to simulate the \nth{i} query of some algorithm.  If we condition on $\br$, then $\rho^k_\pi$ becomes the state $\psi_{\pi,r_1}\ot\cdots\ot\psi_{\pi,r_k}$.  We will use the \nth{i} component of this state to simulate our query.

Suppose we want to simulate the action of $\cO_{\pi,r_i}$ on the state $\ket{x'}^{X'}\ket{y'}^{Y'}$.  Define
$$A = \sum_{\substack{x\in [N]}}
\ket{x}^X\bra{x,x}^{XX'} \ot 
\frac{1}{\sqrt{M}}\sum_{y\in[M]}\bra{y,y}^{YY'}$$
Since $AA^\dag = \sum_x \proj{x}=I_N$, it follows that $A^\dag A\leq I_{N^2M^2}$ and $\{A,\sqrt{I-A^\dag A}\}$ comprise valid Kraus operators for a quantum operation.  Our simulation will apply this operation, with outcome $A$ labeled success, and $\sqrt{I-A^\dag A}$ labeled failure.

Upon outcome $\sqrt{I-A^\dag A}$, the algorithm declares failure.  If
this occurs at any step of a multi-query algorithm, then the algorithm
should guess $j$ according to the {\em a priori} distribution $\mu$.
Thus, for the purposes of determining whether the algorithm outperforms the
best guessing strategy, it then suffices to consider only the cases
when outcome $A$ occurs.
 
Upon outcome $A$, $\ket{\psi_{\pi,r_i}}^{XYZ}\ket{x'}^{X'}\ket{y'}^{Y'}$ is
mapped to the (unnormalized) state $\frac{1}{\sqrt{NM^2}} \ket{x'}^X
\ket{\pi_{x,r_i}(y')}^Z$.  Since the normalization is independent of the
input, this means that $A$ occurs with probability $1/NM^2$ regardless
of the input state.  Conditioned on this outcome, the resulting map is
precisely the action of $\cO_{\pi,r_i}$.

The overall algorithm succeeds when each of the $k$ queries succeeds.  Since each query succeeds with probability $1/NM^2$, the overall algorithm succeeds with probability $1/N^kM^{2k}$.
\end{proof}

Armed with our notion of encoding, it is straightforward to characterize quantum uselessness.

\begin{corollary}\label{cor:k-useless}
Define $\sigma_j = \E_{\pi \in C_j} \rho^k_\pi$.   
Then $k$ quantum queries are useless if and only if all the  $\sigma_j$ are the same.
\end{corollary}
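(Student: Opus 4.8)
The plan is to use the operational equivalence of \thmref{cor-encoding} to reduce the behaviour of an arbitrary $k$-query algorithm to a linear functional of the single encoding $\rho^k_\pi$. Fix any $k$-query quantum algorithm $\cA$ with final POVM $\{M_s\}$, and set $q = 1/N^kM^{2k}$. First I would replace each of $\cA$'s (possibly adaptive) oracle calls by the simulation of \thmref{cor-encoding}(2): prepare $\cA$'s fixed input registers, consume one copy of $\rho^k_\pi$, routing its $i$-th tensor factor $\psi_{\pi,r_i}$ to simulate $\cA$'s $i$-th query, and then apply $\{M_s\}$. Conditioned on all $k$ simulations succeeding---an event of $\pi$-independent probability $q$---this reproduces $\cA$ exactly. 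Since the whole procedure is a fixed quantum operation applied to the input $\rho^k_\pi$ followed by a measurement, its outcome probabilities are linear functionals of the input: concretely there exist operators $0\le G_s$ with $\Pr(\cA\text{ outputs }s\mid\pi) = \tfrac1q\,\tr(G_s\rho^k_\pi)$. Averaging over $\pi\in C_j$ and pulling the expectation through the trace then gives $\Pr(s\mid\pi\in C_j) = \tfrac1q\,\tr(G_s\sigma_j)$.

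Granting this representation, both directions are short. For ($\Leftarrow$), if all $\sigma_j$ coincide then $\Pr(s\mid\pi\in C_j) = \tfrac1q\,\tr(G_s\sigma)$ is independent of $j$; since each class has positive prior, Bayes' rule converts this into $\Pr(\pi\in C_j\mid s) = \Pr(\pi\in C_j)$, and as $\cA$ was arbitrary, $k$ quantum queries are useless. For ($\Rightarrow$) I would instead feed one well-chosen family of algorithms into the definition of uselessness: by \thmref{cor-encoding}(1), preparing $\rho^k_\pi$ costs exactly $k$ queries, so for every POVM $\{M_s\}$ the map that prepares $\rho^k_\pi$ and measures $\{M_s\}$ is a legal $k$-query algorithm with $\Pr(s\mid\pi) = \tr(M_s\rho^k_\pi)$. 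Uselessness then forces $\tr(M_s\sigma_j)$ to be independent of $j$ for every POVM element $M_s$. Because the operators $0\le M\le I$ span the Hermitian matrices and $(A,B)\mapsto\tr(AB)$ is nondegenerate on them, this pins down each $\sigma_j$ uniquely, i.e.\ all $\sigma_j$ are equal.

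I expect the only real work to lie in establishing the linear-functional representation of the first paragraph, since everything afterward is Bayes' rule and elementary linear algebra. The two points needing care are: (i) the $\pi$-independence of the success probability $q$ (guaranteed by \thmref{cor-encoding}, since the normalization of the successful Kraus operator does not depend on the input), so that conditioning on success rescales every $\Pr(s\mid\pi)$ by the same constant and does not distort the relative weights across different $\pi$; and (ii) that routing the $i$-th simulated query through the $i$-th factor of $\rho^k_\pi$ reproduces the correlated seed distribution $R^k$ faithfully, so that the simulated and genuine executions of $\cA$ have identical output statistics for each fixed $\pi$. A minor point is the treatment of outcomes with $\Pr(s)=0$: since every class has positive probability, $\Pr(s)=0$ forces $\Pr(s\mid\pi\in C_j)=0$ for all $j$, so the identity $\tr(M_s\sigma_j)=\mathrm{const}$ holds trivially there and causes no trouble.
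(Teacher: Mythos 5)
Your proof is correct and takes essentially the same route as the paper's: both directions hinge on Theorem~\ref{thm:cor-encoding}, using the heralded simulation (with its $\pi$-independent success probability) to reduce an arbitrary $k$-query algorithm to a measurement on $\rho^k_\pi$, and the $k$-query preparation of $\rho^k_\pi$ together with the fact that distinct density matrices are distinguishable by some measurement for the converse. The paper compresses this into a few sentences; your write-up simply makes explicit the linear-functional representation, the Bayes'-rule steps, and the spanning argument that the paper leaves implicit.
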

\begin{proof}
By \thmref{cor-encoding}, any $k$-query algorithm can WLOG create $\rho^k_\pi$, resulting in the state $\sigma_j$ if $\pi$ is drawn randomly from $C_j$.  The algorithm then proceeds to determine which $\sigma_j$ it holds, using no further oracle queries.  If all the $\sigma_j$ are equal, then it can learn nothing about $j$.  Conversely, if some $\sigma_j$ is different from the others, then there is a measurement that will be able to guess $j$ with positive advantage.
\end{proof}


To conclude the proof of Theorem~\ref{randomized-uselessness-theorem}, observe that the quantity on the LHS of \defref{pairwise} is precisely $\tr \left(\ket{\bm{x}}\bra{\bm{x'}} \ot \ket{\by}\bra{\by'} \ot \ket{\bz}\bra{\bz'}\right) \E_{\pi\in C_j} \mu(\pi) \rho^k_\pi = \tr \left(\ket{\bm{x}}\bra{\bm{x'}} \ot \ket{\by}\bra{\by'} \ot \ket{\bz}\bra{\bz'}\right) \sigma_j$ which will be independent of $j$ for all $\bx,\bx',\by,\by',\bz,\bz'$ if and only if all of the $\sigma_j$ are identical.  Combined with Corollary~\ref{cor:k-useless}, this completes the proof of Theorem~\ref{randomized-uselessness-theorem}.

\begin{theorem}
  Suppose that for some oracle problem $k$ classical queries are weakly useless but $k$ quantum queries are not useless.  Then there exists an oracle problem in which this separation holds where the oracle acts by bitwise XOR.
\end{theorem}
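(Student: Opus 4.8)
The plan is to give an explicit construction that turns the given permutation oracle into a bitwise-XOR oracle on a slightly enlarged control set, while leaving both the classical and the quantum query statistics essentially unchanged. Write $M = 2^m$ (if $M$ is not a power of two, embed $[M]$ into $\mathbb{F}_2^m$ and extend each $\pi_{x,r}$ arbitrarily on the padding). Given the original oracle with permutations $\pi_{x,r}\in S_M$, control $x\in[N]$, and internal randomness $r\sim R$, I would define a new oracle $\tilde{\cO}$ whose control is the pair $\tilde x=(x,y)\in[N]\times[M]$ and whose target $\tilde y$ ranges over $\mathbb{F}_2^m$, acting by
$$\ket{(x,y)}\ket{\tilde y}\mapsto \ket{(x,y)}\ket{\tilde y \oplus \pi_{x,r}(y)}.$$
This is manifestly a bitwise-XOR oracle: the target is translated by the string $\pi_{x,r}(y)$. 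The classes $\{C_j\}$ and the prior $\mu$ are inherited through the obvious bijection between old and new oracles, and the internal randomness is still distributed according to $R^k$.

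Next I would verify weak classical uselessness for the new problem. A classical query to $\tilde{\cO}$ with control $(x,y)$ and target $\tilde y$ returns $\tilde y\oplus\pi_{x,r}(y)$, which is information-theoretically equivalent to revealing $\pi_{x,r}(y)$. Hence conditioning on the outcomes of any $k$ new queries is identical to conditioning on the event $\pi_{\bx,\br}(\by)=\bz$ in the original problem (with $\bx,\by$ read off the new controls and $\bz$ recovered from the new inputs and outputs). Since $k$ classical queries are weakly useless for the original problem for \emph{all} $\bx,\by,\bz$, the identity $\Pr(\pi\in C_j\mid\cdot)=\Pr(\pi\in C_j)$ carries over verbatim, so $k$ classical queries are weakly useless for the XOR problem as well.

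The real work is transferring quantum \emph{non}-uselessness, where I would use the encoding machinery. By \cor{k-useless} it suffices to track the states $\sigma_j=\E_{\pi\in C_j}\rho^k_\pi$. I claim a single fixed unitary $U$, independent of $\pi$ and $r$, satisfies $U\ket{\tilde\psi_{\pi,r}}=\ket{\psi_{\pi,r}}\ot\ket{\omega}$, where $\ket\omega=\frac{1}{\sqrt M}\sum_{\tilde y}\ket{\tilde y}$ is a fixed decoupled state and $\ket{\psi_{\pi,r}}$ is the original encoding. Indeed, the new encoding is
$$\ket{\tilde\psi_{\pi,r}}=\frac{1}{\sqrt{NM^2}}\sum_{x,y,\tilde y}\ket{x,y}^X\ket{\tilde y}^Y\ket{\tilde y\oplus\pi_{x,r}(y)}^Z,$$
and applying CNOT with control $Y$ and target $Z$ uncomputes the offset, leaving $\ket{\pi_{x,r}(y)}^Z$ and a now-uniform, decoupled register $Y$; the control $X$ already carries the pair $(x,y)$, so after a fixed relabeling the $X,Z$ registers hold exactly $\ket{\psi_{\pi,r}}$. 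Applying $U^{\ot k}$ and averaging over $\br$ and over $\pi\in C_j$ then gives $U^{\ot k}\tilde\sigma_j (U^{\ot k})^\dag=\sigma_j\ot\omega^{\ot k}$ (up to a fixed reordering of registers), using \thm{cor-encoding} for the $k$-fold structure. Because $U^{\ot k}$ is fixed, the $\tilde\sigma_j$ are all equal iff the $\sigma_j$ are all equal; hence if $k$ quantum queries are not useless for the original problem, they are not useless for the XOR problem either.

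The main obstacle is precisely this quantum direction. A naive attempt to have the XOR oracle directly \emph{simulate} the permutation oracle fails: after writing $\pi_{x,r}(y)$ into a fresh register, one is stuck with a leftover copy of $y$ that cannot be erased without access to $\pi_{x,r}^{-1}$. The encoding viewpoint sidesteps this entirely—rather than simulating the oracle, I only need the single fixed unitary relating the two families of encodings, and the spurious register becomes the harmless decoupled factor $\omega^{\ot k}$. I would close by checking that the power-of-two padding is harmless (the XOR masks always land in the image of $[M]$, and neither $U$ nor $\ket\omega$ depends on $\pi$ or $r$).
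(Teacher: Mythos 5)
Your proposal is correct and takes essentially the same route as the paper: your $\tilde{\cO}$ is exactly the paper's construction $\mathcal{O}'\colon\ket{x}\ket{y}\ket{z}\mapsto\ket{x}\ket{y}\ket{z\oplus\pi_{x,r_i}(y)}$, and your classical step (query outcomes of the XOR oracle are in bijection with query outcomes of the original oracle) is the paper's classical simulation argument. For the quantum direction the paper observes that $\mathcal{O}'$ can prepare the encoding $\ket{\psi_{\pi,r}}$ of $\mathcal{O}$ and invokes the simulation of Theorem~\ref{thm:cor-encoding}, whereas you make the equivalent observation that a fixed CNOT-based unitary maps the new encoding to $\ket{\psi_{\pi,r}}\otimes\ket{\omega}$ and conclude via Corollary~\ref{cor:k-useless} --- the same encoding machinery, just spelled out in more detail (and yielding an ``if and only if'' rather than the one direction needed).
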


\begin{proof}
  Consider an oracle problem $(C,\mu)$ for which $k$ classical queries are weakly useless but $k$ quantum queries are not useless.  The oracle acts by $\mathcal{O} : \ket{x}\ket{y} \mapsto \ket{x}\ket{\pi_{x,r_i}(y)}$ on the \nth{i} call.  We can define a new oracle $\mathcal{O}^\prime : \ket{x}\ket{y}\ket{z} \mapsto \ket{x}\ket{y}\ket{z \oplus \pi_{x,r_i}(y)}$.  Our new oracle $\mathcal{O}^\prime$ can be used to prepare the encoding for $\mathcal{O}$ so $k$ queries to $\mathcal{O}^\prime$ can simulate any quantum algorithm that uses $k$ queries to $\mathcal{O}$.  Classically, $\mathcal{O}^\prime$ can be simulated using $\mathcal{O}$ so we conclude that $k$ classical queries to $\mathcal{O}^\prime$ are weakly useless.
\end{proof}

\section{Amplifying separations}
\label{amplification}
We now leverage our results to obtain a general method of amplifying any separation between classical and quantum uselessness.  Let $(C,\mu)$ be an oracle problem where $C$ is partitioned into $\{C_i\}$ and $r_j$ is the \nth{j} random seed.  For each $\pi \in C$, we have an oracle $\mathcal{O}_\pi$.  Suppose that $k$ classical queries are weakly useless but $k$ quantum queries are not useless.  Let us define the oracle $\mathcal{O}_i : \ket{x_1} \cdots \ket{x_k} \ket{y_1} \cdots \ket{y_k} \mapsto \ket{x_1} \cdots \ket{x_k} \ket{\pi_{x_1, r_1}(y_1)} \cdots \ket{\pi_{x_k, r_k}(y_k)}$ where $\pi$ is selected from $C_i$ according to $\mu$ (this is done independently for each query), $\br$ is distributed according to $R^k$ and a fresh random seed $\bm{r}$ is used for every query to $\mathcal{O}_i$.  Consider the problem of determining $i$ where the oracle $\mathcal{O}_i$ is given with probability $\mu(\pi \in C_i)$.

\begin{theorem}
  \label{thm:gen-classical-uselessness}
  Any number of classical queries to the oracle $\mathcal{O}_i$ is weakly useless for determining $i$.
\end{theorem}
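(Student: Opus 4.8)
The plan is to reduce everything to a single-query statement and then eliminate cross-query correlations using the fact that $\mathcal{O}_i$ re-randomizes on every call. First I would rewrite the weak-uselessness hypothesis in its Bayes-dual form. The hypothesis says $\Pr(\pi \in C_j \mid \pi_{\bx,\br}(\by) = \bz) = \Pr(\pi \in C_j)$ for all $\bx \in [N]^k$, $\by,\bz \in [M]^k$ and $j$, with $\pi \sim \mu$ and $\br \sim R^k$. By Bayes' rule this is equivalent to the conditional output distribution $\Pr\bigl(\pi_{\bx,\br}(\by) = \bz \mid \pi \in C_j\bigr)$ being independent of $j$. This is the form I will actually use.

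Next I would analyze a single query to the bundled oracle $\mathcal{O}_i$. A classical query with input $(\bx,\by) = (x_1,\ldots,x_k,y_1,\ldots,y_k)$ returns $\bz = (\pi_{x_1,r_1}(y_1), \ldots, \pi_{x_k,r_k}(y_k)) = \pi_{\bx,\br}(\by)$, where $\pi$ is freshly drawn from $C_i$ according to $\mu$ and $\br \sim R^k$. By construction the shared seed $\br$ of the $k$ bundled sub-queries is distributed exactly as $R^k$, so the output distribution of one query to $\mathcal{O}_i$ is precisely $\Pr(\pi_{\bx,\br}(\by) = \bz \mid \pi \in C_i)$, which by the previous paragraph is independent of $i$. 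Thus a single query carries no information about $i$.

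The remaining step is to handle an arbitrary number $T$ of (possibly adaptive) queries. Because $\mathcal{O}_i$ draws an independent fresh pair $(\pi,\br)$ on every call, the transcript factorizes: conditioned on $i$,
\[
  \Pr\bigl(\bz_1,\ldots,\bz_T \mid i\bigr) = \prod_{t=1}^T \Pr\bigl(\bz_t \mid (\bx_t,\by_t)\bigr),
\]
where the input $(\bx_t,\by_t)$ to the \nth{t} query is determined by the algorithm's strategy and the earlier outputs $\bz_1,\ldots,\bz_{t-1}$, but not by $i$. Each factor is independent of $i$ by the single-query analysis, so the whole product---and hence the full transcript distribution---is independent of $i$. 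Therefore the posterior on $i$ given any transcript equals the prior $\mu(\pi \in C_i)$, which is exactly weak uselessness for any number of classical queries.

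The individual steps are routine; the one point demanding care is the correspondence in the second paragraph---recognizing that the $k$ sub-queries bundled inside one call to $\mathcal{O}_i$ share a single seed $\br \sim R^k$ and therefore match the $k$-query structure in the weak-uselessness definition exactly, rather than behaving like $k$ independent single queries. The induction over adaptive queries is the only other place to be slightly careful, but the independence of the fresh randomness across calls makes it immediate.
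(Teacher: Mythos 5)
Your proof is correct and takes essentially the same route as the paper's: rewrite weak uselessness in its Bayes-dual form, observe that one call to $\mathcal{O}_i$ is exactly $k$ queries to the original oracle sharing a seed $\br \sim R^k$, and use the fresh $(\pi,\br)$ drawn at each call to factorize the multi-query transcript into single-query factors that are each independent of $i$. The only cosmetic difference is that you treat adaptive query choices explicitly, while the paper works with fixed query tuples (which is all the definition of weak uselessness requires); the two formulations are equivalent here.
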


\begin{proof}
  Clearly, a single query to $\mathcal{O}_i$ is equivalent to $k$ queries to the original oracle which are weakly useless by assumption.  We conclude that a single classical query to the new oracle is weakly useless.  We now show that $\ell$ classical queries are weakly useless for any $\ell \geq 1$.  Let $\bx_j \in [N]^k$, $\by_j, \bz_j \in [M]^k$ and let each $\br_j$ be sampled independently from $R^k$ where $1 \leq j \leq \ell$.  We must prove that

  \begin{align}
    \Pr(i \mid \pi^j_{\bx_j, \br_j}(\by_j) = \bz_j, j = 1, \ldots, \ell) & = \Pr(i)
  \end{align}
  where each $\pi^j$ is sampled independently from $C_i$ according to $\mu$.  This condition is equivalent to
  
  \begin{align}
    \Pr(\pi^j_{\bx_j, \br_j}(\by_j) = z_j, j = 1, \ldots, \ell \mid i) & = \Pr(\pi^j_{\bx_j, \br_j}(\by_j) = \bz_j, j = 1, \ldots, \ell)
  \end{align}
  
  Note that by construction, $\Pr(\pi^j_{\bx_j, \br_j}(\by_j) = z_j, j = 1, \ldots, \ell \mid i) = \prod_j \Pr(\pi^j_{\bx_j, \br_j}(\by_j) = \bz_j \mid i)$.  By our assumption that $k$ classical queries to the original oracle are weakly useless, we have that $\Pr(i \mid \pi_{\bx_j, \br_j}(\by_j) = \bz_j) = \Pr(i)$ or equivalently $\Pr(\pi_{\bx_j, \br_j}(\by_j) = \bz_j \mid i) = \Pr(\pi_{\bx_j, \br_j}(\by_j) = \bz_j)$.  Therefore,
  
  \begin{align}
    \Pr(\pi^j_{\bx_j, \br_j}(\by_j) = \bz_j, j = 1, \ldots, \ell) & = \sum_i \Pr(\pi^j_{\bx_j, \br_j}(\by_j) = z_j, j = 1, \ldots, \ell \mid i) \Pr(i) \\
                                                               {} & = \sum_i \left(\prod_j \Pr(\pi^j_{\bx_j, \br_j}(\by_j) = z_j \mid i)\right) \Pr(i) \\
                                                               {} & = \sum_i \left(\prod_j \Pr(\pi^j_{\bx_j, \br_j}(\by_j) = z_j)\right) \Pr(i) \\
                                                               {} & = \prod_j \Pr(\pi^j_{\bx_j, \br_j}(\by_j) = z_j) \\
                                                               {} & = \prod_j \Pr(\pi^j_{\bx_j, \br_j}(\by_j) = z_j \mid i) \\
                                                               {} & = \Pr(\pi^j_{\bx_j, \br_j}(\by_j) = z_j, j = 1, \ldots, \ell \mid i)
  \end{align}
  which is the desired result.
\end{proof}

We conclude that no matter how many classical queries are made to $\mathcal{O}_i$, no information is obtained about $i$.  On the other hand, we have the following result:

\begin{theorem}
  \label{thm:gen-quantum-uselessness}
  A single quantum query to $\mathcal{O}_i$ is \emph{not} useless for determining $i$.
\end{theorem}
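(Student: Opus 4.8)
The plan is to show that a single quantum query to $\mathcal{O}_i$ produces exactly the density matrix $\sigma_i = \E_{\pi\in C_i}\rho^k_\pi$ that appears in \corref{k-useless}, and then to read off the conclusion from that corollary. Concretely, I would regard $\mathcal{O}_i$ as an oracle with internal randomness $(\pi,\br)$, where $\pi$ is drawn according to $\mu$ conditioned on $C_i$ and $\br\sim R^k$, whose control register is $\bx\in[N]^k$ and whose target register is $\by\in[M]^k$. Applying the (trivial) creation direction of \thmref{encoding} to this oracle---equivalently, preparing the uniform superposition $\frac{1}{\sqrt{(NM)^k}}\sum_{\bx,\by}\ket{\bx}^X\ket{\by}^Y\ket{\by}^Z$ and applying $\mathcal{O}_i$ once to registers $X$ and $Z$---one use of $\mathcal{O}_i$ creates its encoding $\rho_{\mathcal{O}_i}=\E_{\pi\in C_i}\E_{\br\sim R^k}\,\psi_{\pi,\br}$, where $\ket{\psi_{\pi,\br}}=\frac{1}{\sqrt{(NM)^k}}\sum_{\bx,\by}\ket{\bx}^X\ket{\by}^Y\ket{\pi_{\bx,\br}(\by)}^Z$.

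The key computation is that this encoding is exactly $\sigma_i$. Because $\pi_{\bx,\br}$ acts coordinatewise and the uniform superposition factorizes, one has $\ket{\psi_{\pi,\br}}=\ket{\psi_{\pi,r_1}}\ot\cdots\ot\ket{\psi_{\pi,r_k}}$, so that $\psi_{\pi,\br}=\psi_{\pi,r_1}\ot\cdots\ot\psi_{\pi,r_k}$. Taking the expectation over $\br\sim R^k$ recovers $\rho^k_\pi$ exactly as defined in \thmref{cor-encoding}, and then averaging over $\pi\in C_i$ yields $\rho_{\mathcal{O}_i}=\E_{\pi\in C_i}\rho^k_\pi=\sigma_i$. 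Hence a single quantum query to $\mathcal{O}_i$ prepares precisely the state $\sigma_i$.

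To finish I would invoke \corref{k-useless}: by the standing hypothesis of this section, $k$ quantum queries to the original oracle are \emph{not} useless, so not all of the $\sigma_j$ coincide, and there exist $i,i'$ with $\sigma_i\ne\sigma_{i'}$. Consequently some measurement distinguishes the family $\{\sigma_i\}$ with positive advantage, and applying that measurement to the state produced by one query to $\mathcal{O}_i$ guesses $i$ with advantage over the prior $\mu(\pi\in C_i)$; thus a single quantum query is not useless. The only genuine content is the factorization that identifies the one-query encoding of $\mathcal{O}_i$ with $\sigma_i$, and the step I would be most careful about is the bookkeeping over the $k$ bundled registers that establishes $\psi_{\pi,\br}=\psi_{\pi,r_1}\ot\cdots\ot\psi_{\pi,r_k}$. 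Once that identity is in hand the result is immediate from the already-established characterization of quantum uselessness, so I do not anticipate any substantive obstacle beyond this routine verification.
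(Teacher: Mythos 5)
Your proposal is correct and follows essentially the same route as the paper: one query to $\mathcal{O}_i$ creates the encoding of $k$ bundled queries to the original oracle, and non-uselessness of $k$ quantum queries then yields a distinguishing measurement. The only (beneficial) refinement is that you invoke \corref{k-useless} directly and note explicitly that averaging over $\pi\in C_i$ gives exactly $\sigma_i$, whereas the paper appeals to the simulation direction of \thmref{cor-encoding} and writes the created state loosely as $\rho^k_\pi$; your bookkeeping of the register factorization $\psi_{\pi,\br}=\psi_{\pi,r_1}\ot\cdots\ot\psi_{\pi,r_k}$ is the same fact the paper uses implicitly.
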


\begin{proof}
  One can use a single quantum query to $\mathcal{O}_i$ to construct the state $\rho_\pi^k$ as described in Theorem~\ref{thm:cor-encoding}.  Applying Theorem~\ref{thm:cor-encoding}, this state may be used to guess $i$ with higher probability than random guessing since $k$ quantum queries are not useless.
\end{proof}

Thus, we have constructed an infinity-vs-one separation in unbounded-error classical and quantum query complexities from an arbitrary initial separation.  One can also construct an infinity-vs-one separation in the bounded-error setting from an arbitrary separation in the unbounded setting; the construction is straightforward and we defer the details to Appendix~\ref{sec:bounded-amp}.

\section*{Acknowledgments}
We thank an anonymous reviewer for suggesting the infinity-vs-one separation for the problem in \cite{beaudrap2001a}.  DJR was funded by NSF grant CCF-0916400 and by the DoD through an NDSEG fellowship.   
AWH was funded by NSF grants CCF-0916400 and CCF-1111382, ARO contract
W911NF-12-1-0486 and IARPA QCS program (D11PC20167).    Much of this work was done while AWH worked at the University of Washington.
The views and conclusions contained herein are those of the authors
and should not be interpreted as necessarily representing the official
policies or endorsements, either expressed or implied, of IARPA,
DoI/NBC, or the U.S. Government.

\newpage
\appendix

\newpage
\section{Alternate proofs of uselessness}
In this appendix, we present alternate proofs of various uselessness theorems.  These proofs do not rely on the idea of encoding oracles into states, but instead give direct arguments, so they are more self-contained, although also longer.  First we prove that pairwise classical uselessness implies quantum uselessness.

\begin{proof}
  The proof is an extension of the technique used by Meyer and Pommersheim.  Suppose that $2 k$ classical queries are pairwise useless.  Consider an oracle $\pi$ that acts by $\mathcal{O}_\pi^i : \ket{x, y, z} \mapsto \ket{x, \pi_{x, r_i}, z}$ for the \nth{i} query.  Note that, as before, the $r_i$ variables may obey an arbitrary joint distribution so different queries are not necessarily independent.  Consider an arbitrary $k$-query quantum algorithm with initial state $\rho_0$ and POVM $\{M_s\}$.  For the \nth{i} query, the algorithm queries the oracle and then applies an arbitrary unitary transformation $U_i$.  This yields the final state

  \begin{align}
    \rho_\pi & = U_k \mathcal{O}_\pi^k \ldots U_1 \mathcal{O}_\pi^1 \rho_0 {\mathcal{O}_\pi^1}^\dagger U_1^\dagger \ldots {\mathcal{O}_\pi^k}^\dagger U_k^\dagger
  \end{align}
  Let us fix the random seed used for the \nth{i} query as $r_i$.  The final state is then

    \begin{align}
    \rho_{\pi, \bm{r}} & = U_k P_{r_k} \ldots U_1 P_{r_1} \rho_0 P_{r_1}^\dagger U_1^\dagger \ldots P_{r_k}^\dagger U_k^\dagger
  \end{align}
  where $P_{r_i}$ denotes the permutative action $\ket{x, y, z} \mapsto \ket{x, \pi_{x, r_i}(y), z}$ of the oracle when the random seed is fixed to $r_i$.  Let $A$ be a matrix, $L = (x, y, z)$ and $L^\prime = (x^\prime, y^\prime, z^\prime)$.  Then

  \begin{align}
    \left(P_{r_i} A P_{r_i}^\dagger\right)_{L, L^\prime} & = \bra{x, \pi_{x, r_i}^{-1}(y), z} A \ket{x^\prime, \pi_{x^\prime, r_i}^{-1}(y^\prime), z^\prime} \\
                                                      {} & = A_{\pi_{\cdot, r_i}(L), \pi_{\cdot, r_i}(L^\prime)}
  \end{align}
  where $\pi_{\cdot, r_i}(L) = (x, \pi_{x, r_i}^{-1}(y), z)$.  Then the state after the \nth{i + 1} query (for the fixed values $\bm{r}$ of the seeds) is

  \begin{align}
    \rho_{i + 1, \bm{r}} & = U_{i + 1} P_{r_{i + 1}} \rho_{i, \bm{r}} P_{r_{i + 1}}^\dagger U_{i + 1}^\dagger
  \end{align}
  so that the matrix elements are

  \begin{align}
    (\rho_{i + 1, \bm{r}})_{L, L^\prime} & = \sum_{K, K^\prime} (U_{i + 1})_{L, K} (\rho_{i, \bm{r}})_{\pi_{\cdot, r_{i + 1}}(K), \pi_{\cdot, r_{i + 1}}(K^\prime)}  (U_{i + 1}^\dagger)_{K^\prime, L^\prime}
  \end{align}
  This value is a function of $L$, $L^\prime$, $\pi_{\cdot, r_{i + 1}}(K)$ and $\pi_{\cdot, r_{i + 1}}(K^\prime)$.  Therefore, the final state $\rho_{\pi, \bm{r}} = \rho_{k, \bm{r}}$ may be written as

  \begin{align}
    \rho_{\pi, \bm{r}} & = \sum_{I} Q_I(\pi_{\bm{x}, \bm{r}}(\bm{y}), \pi_{\bm{x}^\prime, \bm{r}}(\bm{y}^\prime))
  \end{align}
  where $I = (L_1, \ldots, L_k, L_1^\prime \ldots, L_k^\prime)$.  Let $\E_{\pi \mid \pi \in C_j}$ denote the expectation over $\pi$ according to the distribution $\Pr(\pi \mid \pi \in C_j)$.  Then for any $j$,

  \begin{align}
    \E_{\pi \mid \pi \in C_j} \rho_{\pi, \bm{r}} & = \sum_I \E_{\pi \mid \pi \in C_j} Q_I(\pi_{\bm{x}, \bm{r}}(\bm{y}), \pi_{\bm{x}^\prime, \bm{r}}(\bm{y}^\prime)) \\
                                              {} & = \sum_I \sum_{\bm{w}, \bm{w^\prime}} \E_{\pi \mid \pi \in C_j} Q_I(\pi_{\bm{x}, \bm{r}}(\bm{y}), \pi_{\bm{x}^\prime, \bm{r}}(\bm{y}^\prime)) [\pi_{\bm{x}, \bm{r}}(\bm{y}) = \bm{w}, \pi_{\bm{x}^\prime, \bm{r}}(\bm{y}^\prime) = \bm{w}^\prime]) \\
                                                   & \text{ where } \bm{w} = (w_1, \ldots, w_k) \text{ and } \bm{w}^\prime = (w_1^\prime, \ldots, w_k^\prime) \nonumber \\
                                              {} & = \sum_I \sum_{\bm{w}, \bm{w^\prime}} Q_I(\bm{w}, \bm{w}^\prime) \E_{\pi \mid \pi \in C_j} [\pi_{\bm{x}, \bm{r}}(\bm{y}) = \bm{w}, \pi_{\bm{x}^\prime, \bm{r}}(\bm{y}^\prime) = \bm{w}^\prime]) \\
                                              {} & = \sum_I \sum_{\bm{w}, \bm{w^\prime}} Q_I(\bm{w}, \bm{w}^\prime) \Pr(\pi_{\bm{x}, \bm{r}}(\bm{y}) = \bm{w}, \pi_{\bm{x}^\prime, \bm{r}}(\bm{y}^\prime) = \bm{w}^\prime \mid \pi \in C_j) \\
  \end{align}
  Taking the expectation over the random seeds $\bm{r}$,

  \begin{align}
    \E_{\pi \mid \pi \in C_j} \E_{\bm{r}} \rho_{\pi, \bm{r}} {} & = \sum_I \sum_{\bm{w}, \bm{w^\prime}} Q_I(\bm{w}, \bm{w}^\prime) \E_{\bm{r}} \Pr(\pi_{\bm{x}, \bm{r}}(\bm{y}) = \bm{w}, \pi_{\bm{x}^\prime, \bm{r}}(\bm{y}^\prime) = \bm{w}^\prime \mid \pi \in C_j) \\
                                                             {} & = \sum_I \sum_{\bm{w}, \bm{w^\prime}} Q_I(\bm{w}, \bm{w}^\prime) \Pr(\pi_{\bm{x}, \bm{r}}(\bm{y}) = \bm{w}, \pi_{\bm{x}^\prime, \bm{r}}(\bm{y}^\prime) = \bm{w}^\prime \mid \pi \in C_j) \\
                                                             {} & = \sum_I \sum_{\bm{w}, \bm{w^\prime}} Q_I(\bm{w}, \bm{w}^\prime) \Pr(\pi \in C_j \mid \pi_{\bm{x}, \bm{r}}(\bm{y}) = \bm{w}, \pi_{\bm{x}^\prime, \bm{r}}(\bm{y}^\prime) = \bm{w}^\prime) \\
                                                                  & {} \cdot \frac{\Pr(\pi_{\bm{x}, \bm{r}}(\bm{y}) = \bm{w}, \pi_{\bm{x}^\prime, \bm{r}}(\bm{y}^\prime) = \bm{w}^\prime)}{\Pr(\pi \in C_j)} \\
                                                             {} & = \sum_I \sum_{\bm{w}, \bm{w^\prime}} Q_I(\bm{w}, \bm{w}^\prime) \Pr(\pi_{\bm{x}, \bm{r}}(\bm{y}) = \bm{w}, \pi_{\bm{x}^\prime, \bm{r}}(\bm{y}^\prime) = \bm{w}^\prime) \\
                                                                  & \text{by pairwise classical uselessness} \nonumber \\
                                                             {} & = \E_\pi \E_r \sum_I \sum_{\bm{w}, \bm{w^\prime}} Q_I(\bm{w}, \bm{w}^\prime) [\pi_{\bm{x}, \bm{r}}(\bm{y}) = \bm{w}, \pi_{\bm{x}^\prime, \bm{r}}(\bm{y}^\prime) = \bm{w}^\prime] \\
                                                             {} & = \E_\pi \E_r \sum_I Q_I(\pi_{\bm{x}, \bm{r}}(\bm{y}), \pi_{\bm{x}^\prime, \bm{r}}(\bm{y}^\prime)) \\
                                                             {} & = \E_\pi \E_{\bm{r}} \rho_{\pi, \bm{r}} \\
  \end{align}
  Defining $\rho_\pi = \E_{\bm{r}} \rho_{\pi, \bm{r}}$, this may be written as

  \begin{align}
    \E_{\pi \mid \pi \in C_j} \rho_\pi & = \E_\pi \rho_\pi
  \end{align}
  Note that for a random $\pi \in C$, the state after running the algorithm is $\E_\pi \rho_\pi$ and for a random $\pi \in C_j$ the state is $\E_{\pi \mid \pi \in C_j}\rho_\pi$.  Now, consider the probability that $\pi \in C_j$ given the measurement outcome $s$.  We have

  \begin{align}
    \Pr(\pi \in C_j \mid s) & = \frac{\Pr(s \mid \pi \in C_j) \Pr(\pi \in C_j)}{\Pr(s)} \\
                         {} & = \frac{\tr M_s \E_{\pi \mid \pi \in C_j} \rho_f}{\tr M_s \E_\pi \rho_\pi} \Pr(\pi \in C_j) \\
                         {} & = \Pr(\pi \in C_j)
  \end{align}
  as claimed.
\end{proof}

Next, we prove that quantum uselessness implies classical uselessness, but in the special case of standard oracles that act via XOR but with internal randomness.  Specifically, consider an oracle that acts by $\mathcal{O}_f^i : \ket{x, y, z} \mapsto \ket{x, y \oplus f(x, r_i), z}$ for the \nth{i} query.  As before, we allow the $r_i$ variables to be drawn from an arbitrary joint distribution.  

\begin{proof}
  Suppose that $k$ quantum queries are useless.  This means that for any POVM $\{M_s\}$ and quantum algorithm run on any initial state $\rho_0$, $\Pr(f \in C_j \mid s) = \Pr(f \in C_j)$ for all $j$.  Since $\Pr(f \in C_j \mid s) = \frac{\Pr(s \mid f \in C_j) \Pr(f \in C_j)}{\Pr(s)}$, this implies that

  \begin{align}
    \Pr(s \mid f \in C_j) & = \Pr(s)
  \end{align}
  for all $j$.  Let us choose the initial state

  \begin{align}
    \rho_0 & = \left(\frac{1}{N} \sum_{x, x^\prime} \ket{x} \bra{x^\prime} \otimes \ket{0} \bra{0}\right)^{\otimes k}
  \end{align}
  and the algorithm defined by the unitary operator $\bigotimes_{i = 1}^k \mathcal{O}_f^i$.  The result of running the algorithm assuming a particular function $f$ and fixed seeds $\bm{r}$ is then

  \begin{align}
    \rho_{f, \bm{r}} & = \left(\bigotimes_{i = 1}^k \mathcal{O}_f^i\right) \rho_0 \left(\bigotimes_{i = 1}^k \mathcal{O}_f^i\right)^\dagger \\
                  {} & = \frac{1}{N^k} \bigotimes_{i = 1}^k \sum_{x, x^\prime} \ket{x, f(x, r_i)} \bra{x^\prime, f(x^\prime, r_i)}
  \end{align}
  For a particular function $f$, the state after running the algorithm is

  \begin{align}
    \rho_f & = \E_{\bm{r}} \rho_{f, \bm{r}} \\
        {} & = \frac{1}{N^k} \E_{\bm{r}} \bigotimes_{i = 1}^k \sum_{x, x^\prime} \ket{x, f(x, r_i)} \bra{x^\prime, f(x^\prime, r_i)}
  \end{align}
  Now

  \begin{align}
    \Pr(s) & = \E_f \Pr(s \mid f) \\
        {} & = \tr M_s \E_f \rho_f \\
        {} & = \tr M_s \rho_C
  \end{align}
  Similarly,

  \begin{align}
    \Pr(s \mid f \in C_j) & = \E_{f \mid f \in C_j} \Pr(s \mid f) \\
                       {} & = \tr M_s \E_{f \mid f \in C_j} \rho_f \\
                       {} & = \tr M_s \rho_{C_j}
  \end{align}
  Since $\Pr(s \mid f \in C_j) = \Pr(f \in C_j)$, this implies that

  \begin{align}
    \tr M_s (\rho_{C_j} - \rho_C) & = 0
  \end{align}
  for \emph{all} POVMs $\{M_s\}$ which means that

  \begin{align}
    \rho_{C_j} & = \rho_C \\
    \frac{1}{N^k} \E_{\bm{r}} \E_{f \mid f \in C_j} \bigotimes_{i = 1}^k \sum_{x, x^\prime} \ket{x, f(x, r_i)} \bra{x^\prime, f(x^\prime, r_i)} & = \frac{1}{N^k} \E_{\bm{r}} \E_f \bigotimes_{i = 1}^k \sum_{x, x^\prime} \ket{x, f(x, r_i)} \bra{x^\prime, f(x^\prime, r_i)}
  \end{align}
  Equating the $((x_1, y_1, \ldots, x_k, y_k), (x_1^\prime, y_1^\prime, \ldots, x_k^\prime, y_k^\prime))$ elements of these matrices, we have that

  \begin{align}
    \E_{\bm{r}} \E_{f \mid f \in C_j} [f(\bm{x}, \bm{r}) = \bm{y}, f(\bm{x}^\prime, \bm{r}) = \bm{y}^\prime] & = \E_{\bm{r}} \E_f [f(\bm{x}, \bm{r}) = \bm{y}, f(\bm{x}^\prime, \bm{r}) = \bm{y}^\prime] \\
    \E_{\bm{r}} \Pr(f(\bm{x}, \bm{r}) = \bm{y}, f(\bm{x}^\prime, \bm{r}) = \bm{y}^\prime \mid f \in C_j) & = \E_{\bm{r}} \Pr(f(\bm{x}, \bm{r}) = \bm{y}, f(\bm{x}^\prime, \bm{r}) = \bm{y}^\prime) \\
    \Pr(f(\bm{x}, \bm{r}) = \bm{y}, f(\bm{x}^\prime, \bm{r}) = \bm{y}^\prime \mid f \in C_j) & = \Pr(f(\bm{x}, \bm{r}) = \bm{y}, f(\bm{x}^\prime, \bm{r}) = \bm{y}^\prime)
  \end{align}
  Applying Bayes' rule, we have

  \begin{align}
    \Pr(f \in C_j \mid f(\bm{x}, \bm{r}) = \bm{y}, f(\bm{x}^\prime, \bm{r}) = \bm{y}^\prime) & = \frac{\Pr(f(\bm{x}, \bm{r}) = \bm{y}, f(\bm{x}^\prime, \bm{r}) = \bm{y}^\prime \mid f \in C_j) \Pr(f \in C_j)}{\Pr(f(\bm{x}, \bm{r}) = \bm{y}, f(\bm{x}^\prime, \bm{r}) = \bm{y}^\prime)} \\
                                                                                          {} & = \Pr(f \in C_j)
  \end{align}
  which is precisely the definition of pairwise classical uselessness in the case of oracles that act by XOR.
\end{proof}

Combining this with Theorem~\ref{randomized-uselessness-theorem}, we have the following result

\begin{corollary}
  For any oracle problem $(C,\mu)$ in the standard model with internal randomness, $k$ quantum queries are useless if and only if $2 k$ classical queries are pairwise useless
\end{corollary}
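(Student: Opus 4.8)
The plan is to read off the corollary as the conjunction of the two implications just established, specialized to oracles of the form $\mathcal{O}_f^i:\ket{x,y,z}\mapsto\ket{x,\,y\oplus f(x,r_i),\,z}$, which are precisely the standard-model oracles with internal randomness. No new machinery is needed: the content lives entirely in the two preceding appendix arguments, and the corollary only asks us to glue them together in the XOR setting. Accordingly, I would structure the proof as ``forward from one argument, backward from the other.''

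For the direction $k$ quantum queries useless $\Rightarrow$ $2k$ classical queries pairwise useless, I would invoke the immediately preceding appendix argument verbatim, since its conclusion is exactly the pairwise-uselessness condition of Definition~\ref{def:pairwise} written out for XOR oracles. That argument was deliberately tailored to the XOR case: its crux is to fix the initial state $\rho_0=\bigl(\tfrac1N\sum_{x,x'}\ket{x}\bra{x'}\otimes\ket{0}\bra{0}\bigr)^{\otimes k}$, run $\bigotimes_i\mathcal{O}_f^i$, and then read off matrix elements of $\rho_{C_j}-\rho_C$; equality of these states, forced by quantum uselessness against all POVMs, yields the pairwise condition directly. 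So this implication is already in hand for the XOR case.

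For the converse, $2k$ classical queries pairwise useless $\Rightarrow$ $k$ quantum queries useless, I would cite Theorem~\ref{randomized-uselessness-theorem} (equivalently, the first appendix proof), which establishes this implication for arbitrary permutative oracles $\pi_{x,r}$ and therefore a fortiori when $\pi_{x,r}(y)=y\oplus f(x,r)$. Combining the two implications gives the stated equivalence. There is no genuine obstacle here; the only point deserving a moment's care is that both arguments refer to the same object, namely pairwise uselessness for $2k$ queries grouped into $k$ seed-sharing pairs $(\bx,\br,\by)$ and $(\bx',\br,\by')$. One checks that Definition~\ref{def:pairwise} specializes consistently to the XOR action in both the direct forward reduction and the encoding-based converse—in particular that the shared seed vector $\br$ plays the same role on both sides—after which the corollary is immediate.
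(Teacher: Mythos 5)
Your proposal is correct and matches the paper's own derivation: the paper obtains this corollary exactly by combining the immediately preceding appendix argument (quantum uselessness $\Rightarrow$ pairwise classical uselessness for XOR oracles) with Theorem~\ref{randomized-uselessness-theorem} for the converse direction. Your additional remark about checking that Definition~\ref{def:pairwise} specializes consistently to the XOR action is a reasonable sanity check but introduces nothing beyond what the paper implicitly does.
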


Since pairwise classical uselessness is equivalent to classical uselessness when $f$ is deterministic, we have the following corollary.

\begin{corollary}
  If $k$ quantum queries are useless for an oracle problem $(C,\mu)$ in the standard model, then $2 k$ classical queries are useless.
\end{corollary}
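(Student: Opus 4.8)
The plan is to read off this corollary as the deterministic specialization of the preceding corollary (the ``standard model with internal randomness'' equivalence), whose only new ingredient beyond \thmref{randomized-uselessness-theorem} is the collapse of pairwise classical uselessness onto ordinary classical uselessness once the oracle carries no internal randomness. So the entire argument reduces to verifying that, for deterministic oracles, the pairwise condition of \defref{pairwise} is literally the Meyer--Pommersheim condition on $2k$ queries.

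To see this, first note that the standard (deterministic) model is the special case in which the internal randomness $R$ is a point mass, so that $\pi_{\bx,\br}(\by)$ does not depend on $\br$ and we may simply write $\pi_{\bx}(\by)$. Under this collapse, the defining requirement of pairwise uselessness for $2k$ queries becomes
\[
  \Pr\bigl(\pi \in C_j \mid \pi_{\bx}(\by)=\bz,\ \pi_{\bx'}(\by')=\bz'\bigr) = \Pr(\pi \in C_j)
\]
for all $\bx,\bx'\in[N]^k$ and all $\by,\by',\bz,\bz'\in[M]^k$. The point is that the two conjoined events $\pi_{\bx}(\by)=\bz$ and $\pi_{\bx'}(\by')=\bz'$ constitute exactly one event on $2k$ queries, namely $\pi_{(\bx,\bx')}\bigl((\by,\by')\bigr)=(\bz,\bz')$ with concatenated control string $(\bx,\bx')\in[N]^{2k}$. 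As $(\bx,\bx')$ ranges over $[N]^k\times[N]^k$ the concatenation ranges over all of $[N]^{2k}$, and likewise for the $y$- and $z$-arguments over $[M]^{2k}$. Hence the displayed condition, quantified over all primed and unprimed tuples, is word-for-word the classical uselessness of $2k$ queries. The shared-seed stipulation of \defref{pairwise}---which is precisely what separates pairwise from weak uselessness in the randomized setting---is simply vacuous here, so pairwise and ordinary classical uselessness coincide for deterministic oracles.

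With this identification in hand, the corollary is immediate: by the preceding corollary, if $k$ quantum queries are useless then $2k$ classical queries are pairwise useless, and by the observation above pairwise uselessness of $2k$ deterministic queries is the same as ordinary classical uselessness of $2k$ queries. There is no genuine obstacle to surmount; the only care required is the bookkeeping confirming that the primed/unprimed grouping of \defref{pairwise} recombines into an unrestricted $2k$-query pattern once no seed is present.
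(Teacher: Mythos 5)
Your proposal is correct and follows exactly the paper's route: the paper derives this corollary by specializing the preceding corollary (the standard-model-with-internal-randomness equivalence) to deterministic oracles, asserting in one line that pairwise classical uselessness coincides with ordinary classical uselessness when the oracle is deterministic. Your concatenation argument---identifying the conjoined events $\pi_{\bx}(\by)=\bz$ and $\pi_{\bx'}(\by')=\bz'$ with the single $2k$-query event on $(\bx,\bx')\in[N]^{2k}$---is precisely the verification the paper leaves implicit.
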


\newpage
\section{Bounded-error infinity-vs-one separations}
\label{sec:bounded-amp}
We now show how to obtain an infinity-vs-one separation in the bounded-error regime from an arbitrary separation between the classical and quantum uselessness.  Consider the oracle $\mathcal{O}_i$ as defined above.  By Theorem~\ref{thm:gen-quantum-uselessness}, there exists a single-query quantum algorithm $A$, a POVM $\{M_s\}$ and an $i^\prime$ such that for some $s$, $\Pr(i = i^\prime \mid s) > \Pr(i = i^\prime)$.  Equivalently, 

\begin{align}
                              \Pr(s \mid i = i^\prime) & > \Pr(s) \\
      \Pr(s \mid i = i^\prime) (1 - \Pr(i = i^\prime)) & > \Pr(s \mid i \not= i^\prime) \Pr(i \not= i^\prime) \\
                              \Pr(s \mid i = i^\prime) & > \Pr(s \mid i \not= i^\prime) \\
                              \Pr(s \mid i = i^\prime) & = \Pr(s \mid i \not= i^\prime) + \epsilon
\end{align}
for some $\epsilon > 0$.  Consider the problem of deciding if $i = i^\prime$ by querying $\mathcal{O}_i$.  By running $A$ some large number of times $T$ and using majority voting and Chernoff bounds, we may decide if $i = i^\prime$ with bounded error.  Although $T$ may be quite large, the gap is large since it is a separation between an infinite number of classical queries and a finite number of classical queries.

\begin{corollary}
  The bounded-error quantum query complexity of deciding if $i = i^\prime$ using $\mathcal{O}_i$ is finite.
\end{corollary}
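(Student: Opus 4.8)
The plan is to treat the constant-sized advantage furnished by \thm{gen-quantum-uselessness} as a biased coin and to amplify it by independent repetition, exactly as one turns any unbounded-error advantage into a bounded-error decision. The point I would stress first is that the gap $\epsilon = \Pr(s \mid i = i') - \Pr(s \mid i \neq i')$ extracted above is a fixed positive constant that does \emph{not} shrink as we make more queries: it is pinned down once and for all by the single-query algorithm $A$, the distinguished POVM outcome $s$, and the class $i'$. Hence the number of repetitions needed for amplification depends only on $\epsilon$ and the target error, not on any growing parameter, and finiteness will follow immediately.

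Concretely, I would proceed as follows. Each independent run of $A$ consumes one query to $\mathcal{O}_i$, and because a fresh random seed $\br$ and a fresh $\pi$ are drawn on every query, the runs are mutually independent conditioned on the (fixed, unknown) true class. Let $Z_t \in \{0,1\}$ indicate that the $t$-th run reports outcome $s$, so that $\E[Z_t] = \Pr(s \mid i)$, with the two hypotheses separated by $\epsilon$. Run $A$ a total of $T$ times, form the empirical frequency $\bar Z = \frac1T \sum_t Z_t$, place a threshold $\tau$ in the middle of the gap, and output ``$i = i'$'' exactly when $\bar Z > \tau$. A Hoeffding/Chernoff bound shows that $\bar Z$ deviates from its mean by more than $\epsilon/2$ with probability at most $2\exp(-T\epsilon^2/2)$, so taking $T = O(\epsilon^{-2}\log(1/\delta))$ drives the error below any desired $\delta$. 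Since $T$ is finite, the bounded-error quantum query complexity of the decision is at most $T$, which proves the corollary.

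The one subtlety, and the step I would treat most carefully, is what ``deciding $i = i'$'' means: the guaranteed gap is between $\Pr(s \mid i = i')$ and the \emph{averaged} complement probability $\Pr(s \mid i \neq i')$, not between $\Pr(s \mid i = i')$ and each individual $\Pr(s \mid i = i_0)$ for $i_0 \neq i'$. For the binary hypothesis test that distinguishes the mixture induced by $C_{i'}$ from the complementary mixture, which is the natural unbounded-error-to-bounded-error statement and the one needed for the infinity-vs-one separation, the single-threshold test above is exactly correct and the argument is complete. To obtain a worst-case guarantee for every individual class one would instead invoke the stronger conclusion of \cor{k-useless}, namely that the encoded states $\sigma_j$ are genuinely distinct, and amplify a two-state discrimination between the relevant $\sigma_j$'s, choosing the measurement that achieves nonzero trace-distance advantage, rather than the frequency of a single outcome. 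In either case the advantage being amplified is a fixed positive constant, so the repetition count remains finite and the corollary's conclusion is unaffected.
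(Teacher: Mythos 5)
Your proposal is correct and matches the paper's own argument: the paper likewise fixes the single-query algorithm $A$, the outcome $s$ with constant gap $\epsilon = \Pr(s \mid i = i') - \Pr(s \mid i \neq i') > 0$, and amplifies by running $A$ a finite number $T$ of times (valid because each query to $\mathcal{O}_i$ draws fresh $\pi$ and $\bm{r}$), deciding via Chernoff-bounded frequency estimation. Your added care about the threshold test versus naive majority voting, and about the gap being against the averaged complement class, only sharpens what the paper states informally.
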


By Theorem~\ref{thm:gen-classical-uselessness}, $\Pr(i \mid \pi^j_{\bx_j, \br_j}(\by_j) = z_j, j = 1, \ldots) = \Pr(i)$ for all $\ell \geq 1$ and $\bx_j \in [N]^k$, $\by_j, \bz_j \in [M]^k$.  Thus, $\Pr(i = i^\prime \mid \pi^j_{\bx_j, \br_j}(\by_j) = z_j, j = 1, \ldots) = \Pr(i = i^\prime)$ and $\Pr(i \not= i^\prime \mid \pi^j_{\bx_j, \br_j}(\by_j) = z_j, j = 1, \ldots) = \Pr(i \not= i^\prime)$ so $\ell$ queries are weakly useless for deciding if $i = i^\prime$.

\begin{corollary}
  Any number of classical queries to the oracle $\mathcal{O}_i$ is weakly useless for deciding if $i = i^\prime$; thus no classical algorithm can decide if $i = i^\prime$ with unbounded error no matter how many queries are made.
\end{corollary}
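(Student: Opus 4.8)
The plan is to derive this corollary directly from \thmref{gen-classical-uselessness} together with the equivalence between uselessness and unbounded-error query complexity recorded in Appendix~\ref{app:unbounded}. Much of the work is already in place: \thmref{gen-classical-uselessness} guarantees that for every $\ell \geq 1$ and all $\bx_j \in [N]^k$, $\by_j, \bz_j \in [M]^k$, the posterior on the index satisfies $\Pr(i \mid \pi^j_{\bx_j,\br_j}(\by_j)=\bz_j,\ j=1,\ldots,\ell) = \Pr(i)$. So the only thing left is to pass from the multi-valued problem of determining $i$ to the binary decision problem ``is $i = i'$?''.

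First I would observe that the decision problem is simply a coarsening of the original partition $\{C_i\}$: the two coarse classes are $C_{i'}$ (corresponding to $i = i'$) and $\bigcup_{i \neq i'} C_i$ (corresponding to $i \neq i'$). Weak uselessness is phrased in terms of the conditional class probabilities, and since the events $\{i = i'\}$ and $\{i \neq i'\}$ are disjoint unions of the single-index events $\{i = \text{value}\}$, the posterior on each coarse class is the sum of the relevant per-index posteriors. Hence equality of every per-index posterior to its prior forces $\Pr(i = i' \mid \cdot) = \Pr(i = i')$ and $\Pr(i \neq i' \mid \cdot) = 1 - \Pr(i = i') = \Pr(i \neq i')$, which is exactly weak uselessness for the binary problem. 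This is the content of the displayed reasoning preceding the statement.

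Finally, to convert the uselessness statement into a claim about algorithms, I would invoke the equivalence (stated precisely in Appendix~\ref{app:unbounded}) between $\ell$ classical queries being useless for a binary problem and its unbounded-error classical query complexity being strictly greater than $\ell$. Since the weak-uselessness conclusion holds for \emph{every} finite $\ell$, the unbounded-error classical query complexity of deciding $i = i'$ exceeds every finite value and is therefore infinite; equivalently, no classical algorithm can gain any advantage over guessing according to the prior, no matter how many queries it makes.

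The argument is essentially bookkeeping, so I do not anticipate a genuine obstacle. The one step that deserves explicit care is the coarsening: one must confirm that weak uselessness really is inherited by any merging of classes. This is immediate from additivity of probability over the disjoint class-events, but it is worth spelling out so that the reduction from the multi-valued separation established in \thmref{gen-classical-uselessness} to the binary decision problem is fully rigorous.
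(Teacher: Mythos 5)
Your proposal is correct and follows essentially the same route as the paper: the text preceding the corollary derives weak uselessness for the binary question by summing the per-index posterior identities from Theorem~\ref{thm:gen-classical-uselessness} over the disjoint events $\{i=i'\}$ and $\{i\neq i'\}$, exactly your coarsening step, and then concludes that no advantage over prior guessing is possible for any number of queries. Your explicit appeal to Appendix~\ref{app:unbounded} for the final step is slightly more machinery than the paper uses (there the conclusion follows directly from the definition of weak uselessness), but it is the same idea and introduces no gap.
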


We can construct a new oracle $\mathcal{O}_i^\prime$ that simulates $T$ queries to $\mathcal{O}_i$ using an independent random seed for each query.  From this we obtain the following.

\begin{corollary}
  The bounded-error quantum query complexity of deciding if $i = i^\prime$ using $\mathcal{O}'_i$ is $1$.
\end{corollary}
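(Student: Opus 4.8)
The plan is to observe that a single query to $\mathcal{O}'_i$ is, by construction, nothing more than a repackaging of $T$ independent queries to $\mathcal{O}_i$, and then to invoke the preceding corollary, which already exhibits a $T$-query bounded-error algorithm for deciding whether $i = i'$. In other words, the work of proving that one $\mathcal{O}'_i$-query suffices has essentially been done; all that remains is to transport the existing $T$-query procedure through the bundling.

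First I would recall the bounded-error algorithm underlying the previous corollary. By Theorem~\ref{thm:gen-quantum-uselessness} there is a single-query algorithm $A$, a POVM $\{M_s\}$ and an outcome for which the distribution of $s$ differs between the cases $i = i'$ and $i \neq i'$; the excerpt records this as $\Pr(s \mid i = i') = \Pr(s \mid i \neq i') + \epsilon$ with $\epsilon > 0$. Running $A$ on $T$ independent queries to $\mathcal{O}_i$ and taking a majority vote then decides $i = i'$ with error below any fixed constant once $T = O(1/\epsilon^2)$, by a Chernoff bound. This is a $T$-query algorithm in the $\mathcal{O}_i$ model.

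The key step is that $\mathcal{O}'_i$ was defined precisely so that one of its queries performs $T$ uses of $\mathcal{O}_i$, each with its own fresh random seed. Consequently I can prepare the $T$-fold tensor product of the input state that $A$ feeds to $\mathcal{O}_i$, apply $\mathcal{O}'_i$ a single time, and then apply $\{M_s\}^{\otimes T}$ and majority-vote over the $T$ outcomes. Because the seeds (and the draws of $\pi \in C_i$) are independent across the $T$ blocks, the $T$ outcomes are i.i.d.\ exactly as in the majority-vote algorithm above, so the Chernoff analysis carries over verbatim and the whole procedure decides $i = i'$ with bounded error using only one query to $\mathcal{O}'_i$. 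This establishes the upper bound of $1$; the matching lower bound is immediate whenever the problem is nontrivial, since a zero-query algorithm can only guess from the prior $\mu$ and hence cannot decide a balanced instance with bounded error.

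The main obstacle, such as it is, is bookkeeping around independence: I must verify that the $T$ sub-queries bundled inside one call to $\mathcal{O}'_i$ reproduce genuinely independent samples of both the internal permutation drawn from $C_i$ and the seed vector $\br \sim R^k$, so that the $T$ measurement outcomes really are i.i.d.\ and the Chernoff bound applies unchanged. This is exactly what the ``fresh random seed for every query'' clause in the definition of $\mathcal{O}'_i$ guarantees (together with the per-query independent draw of $\pi$ built into $\mathcal{O}_i$), so beyond carefully tracking the product structure of the $T$ blocks no further argument is needed.
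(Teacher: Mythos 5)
Your proposal is correct and follows essentially the same route as the paper: the paper's (very terse) justification is exactly that one query to $\mathcal{O}'_i$ packages $T$ independent-seed queries to $\mathcal{O}_i$, so the $T$-query majority-vote/Chernoff algorithm from the preceding corollary becomes a single-query algorithm. Your additional bookkeeping on the i.i.d.\ structure of the $T$ blocks and the trivial lower bound of $1$ just makes explicit what the paper leaves implicit.
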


\begin{corollary}
  Any number of classical queries to the oracle $\mathcal{O}_i^\prime$ is weakly useless for deciding if $i = i^\prime$; thus no classical algorithm can decide if $i = i^\prime$ with unbounded error no matter how many queries are made.
\end{corollary}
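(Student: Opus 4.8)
The plan is to observe that $\mathcal{O}_i^\prime$ carries no more information than a fixed number of independent calls to $\mathcal{O}_i$, so weak uselessness transfers directly from the result already established for $\mathcal{O}_i$. By construction each invocation of $\mathcal{O}_i^\prime$ performs exactly $T$ internal calls to $\mathcal{O}_i$, each drawing a fresh seed $\br \sim R^k$ and a fresh $\pi \sim C_i$. Hence a single classical query to $\mathcal{O}_i^\prime$ produces precisely the joint distribution of outputs one would obtain from $T$ classical queries to $\mathcal{O}_i$, and more generally $\ell$ classical queries to $\mathcal{O}_i^\prime$ reproduce the output distribution of $\ell T$ classical queries to $\mathcal{O}_i$.

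First I would make this reduction explicit: given the $\ell$ control/target choices submitted to $\mathcal{O}_i^\prime$, I unpack them into $\ell T$ query tuples $(\bx_m, \by_m)$ for $m = 1, \ldots, \ell T$, noting that the seeds $\br_m$ governing these tuples are mutually independent---both across the $T$ internal calls within a single $\mathcal{O}_i^\prime$ query and across the $\ell$ distinct queries. This is exactly the independent-seed regime to which Theorem~\ref{thm:gen-classical-uselessness} applies.

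Next I would invoke Theorem~\ref{thm:gen-classical-uselessness}, which guarantees that \emph{any} number---in particular $\ell T$---of classical queries to $\mathcal{O}_i$ is weakly useless for determining $i$, i.e.\ the posterior on $i$ given the observed outputs equals the prior. Since the event $\{i = i^\prime\}$ is a union of the classes indexed by $i$, the posterior on this coarser binary partition likewise equals the prior, so $\Pr(i = i^\prime \mid \text{outputs}) = \Pr(i = i^\prime)$. Because the argument holds for every $\ell$, no finite number of classical queries to $\mathcal{O}_i^\prime$ yields any advantage, which is the claimed unbounded-error impossibility.

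The only point requiring care---and the closest thing to an obstacle here---is confirming that the seeds remain independent after unpacking, so that the induced distribution genuinely matches $\ell T$ honest queries to $\mathcal{O}_i$ rather than some correlated variant to which Theorem~\ref{thm:gen-classical-uselessness} need not apply. Since $\mathcal{O}_i^\prime$ is defined to use an independent seed for each of its internal calls and a fresh seed for each of its own invocations, this independence is immediate, and the reduction goes through verbatim.
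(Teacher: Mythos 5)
Your proposal is correct and takes essentially the same route as the paper: the paper likewise treats each query to $\mathcal{O}_i^\prime$ as $T$ independent-seed queries to $\mathcal{O}_i$, invokes Theorem~\ref{thm:gen-classical-uselessness} to conclude the posterior on $i$ equals the prior, and coarsens this to the binary event $\{i = i^\prime\}$. Your explicit attention to seed independence in the unpacking step is exactly the point the paper relies on when defining $\mathcal{O}_i^\prime$ to use an independent random seed for each simulated query.
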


Thus, we have constructed an infinity-vs-one separation between the bounded-error quantum query complexity and the unbounded-error classical query complexity from an arbitrary initial separation.  This comes at the price of large inputs for the constructed oracle.

\section{Relation between uselessness and unbounded query complexity}\label{app:unbounded}

In this section, we define {\em binary} oracle problems to be those where our goal is to output a single bit, or equivalently, where $C$ is partitioned into only two sets $C_0,C_1$, and our goal is to determine whether $\pi\in C_0$ or $\pi\in C_1$.
\begin{proposition} 
Let $C$ be a binary oracle problem.  Then the unbounded quantum (resp. classical) query complexity of $C$ is $>k$ if and only if there exists a distribution $\mu$ with $\mu(C_0)=\mu(C_1)=1/2$ such that $k$ quantum (resp. classical) queries are useless for $(C,\mu)$.  
\end{proposition}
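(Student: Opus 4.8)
The plan is to reduce unbounded-error query complexity to a geometric statement about the encoded states $\rho^k_\pi$ of \thmref{cor-encoding}, and then to match that statement against \corref{k-useless}. I will argue the quantum case in detail; the classical case is identical after replacing density matrices by probability distributions over query transcripts and POVMs by (possibly randomized) decision rules, together with the extra observation that classical uselessness for fixed query tuples already implies uselessness against adaptive strategies, so no generality is lost by fixing the queries. For a binary problem the prior is $\Pr(\pi\in C_0)=\Pr(\pi\in C_1)=1/2$, so an algorithm ``solves'' $C$ with unbounded error exactly when its probability of a correct guess exceeds $1/2$ on \emph{every} $\pi\in C$. Using \thmref{cor-encoding}, any $k$-query algorithm may be replaced by one that creates $\rho^k_\pi$ and measures it: on failure (which is heralded) the algorithm guesses according to $\mu$ and succeeds with probability exactly $1/2$, so the simulated success probability on each $\pi$ is a convex combination of the original success with $1/2$, hence exceeds $1/2$ iff the original does. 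Thus the unbounded-error complexity is $\le k$ iff there is a two-outcome POVM $\{M_0,M_1\}$ with $\tr(M_{j}\rho^k_\pi)>1/2$ for every $\pi\in C_j$ and each $j$.

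Write $S_j=\{\rho^k_\pi:\pi\in C_j\}$ and let $K_j$ be the set of achievable averages $\sigma_j=\E_{\pi\sim\mu\mid C_j}\rho^k_\pi$, i.e. the (closed) convex hull of $S_j$. I first dispatch the easy implication. Suppose some $\mu$ with $\mu(C_0)=\mu(C_1)=1/2$ makes $k$ quantum queries useless. By \corref{k-useless} this means $\sigma_0=\sigma_1=:\tau$, so $\tau\in K_0\cap K_1$. For any measurement $\{M_0,M_1\}$, if $\tr(M_1\rho^k_\pi)>1/2$ held for all $\pi\in C_1$ then averaging over $\mu\mid C_1$ would give $\tr(M_1\tau)>1/2$, whereas if $\tr(M_0\rho^k_\pi)>1/2$ held for all $\pi\in C_0$ then $\tr(M_1\tau)=\tr((I-M_0)\tau)<1/2$, a contradiction. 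Hence no measurement solves $C$ with unbounded error, so the complexity is $>k$.

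For the converse I argue contrapositively: assuming no equal-weight $\mu$ yields uselessness, i.e. $K_0\cap K_1=\emptyset$, I produce a solving measurement. Since the $K_j$ are disjoint convex subsets of the finite-dimensional real space of Hermitian operators (compact when $C$ is finite, e.g. when $\supp(R)$ is finite, as then $S_0,S_1$ are finite sets), the separating-hyperplane theorem gives a Hermitian $H$ and reals $a<b$ with $\tr(HX)\le a$ on $K_0$ and $\tr(HY)\ge b$ on $K_1$. Setting $M_1=\gamma(H-\tfrac{a+b}{2}I)+\tfrac12 I$ for a sufficiently small $\gamma>0$ keeps $0\le M_1\le I$ and, using $\tr\rho^k_\pi=1$, forces $\tr(M_1\rho^k_\pi)<1/2$ for $\pi\in C_0$ and $>1/2$ for $\pi\in C_1$ with a uniform gap. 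Then $\{I-M_1,M_1\}$ solves $C$ with unbounded error, so the complexity is $\le k$; contrapositively, complexity $>k$ forces $K_0\cap K_1\neq\emptyset$, and any $\tau$ in the intersection is realized as $\sigma_0(\mu_0)=\sigma_1(\mu_1)$ for conditionals $\mu_j$ on $C_j$, whence $\mu=\tfrac12\mu_0+\tfrac12\mu_1$ has equal class weights and, by \corref{k-useless}, makes $k$ quantum queries useless.

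The main obstacle is the compactness/strictness bookkeeping in the hard direction: ``solves with unbounded error'' only demands pointwise success $>1/2$, whereas the separating-hyperplane theorem needs disjoint \emph{compact} convex sets to produce the strict, uniform-gap separation that yields an honest POVM element. This is automatic when $C$ is finite---which holds whenever $\supp(R)$ is finite, since $S_M$ and $[N]$ are finite---and I would state the proposition under that hypothesis (or, in general, pass to closures and note that a point of $K_0\cap K_1$ is still the barycenter of a distribution supported on $C_j$ by a standard finite-dimensional argument). A secondary point to verify is that converting the abstract separating functional $H$ into an operator $M_1$ with $0\le M_1\le I$ costs only an affine rescaling, which is precisely what the small $\gamma$ accomplishes.
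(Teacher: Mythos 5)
Your proof is correct in its essentials, but it takes a genuinely different route from the paper's. The paper disposes of the ``if'' direction exactly as you do (uselessness with equal priors forbids average, hence worst-case, success above $1/2$), but for the converse it simply invokes Yao's minimax principle to produce a hard distribution $\mu$, and then notes that since guessing achieves $\max(\mu(C_0),\mu(C_1))$, the hard distribution must have $\mu(C_0)=\mu(C_1)=1/2$; the passage from ``no algorithm beats $1/2$ on average'' to uselessness is left implicit. You instead prove the needed duality by hand: you use \thmref{cor-encoding} to reduce $k$-query algorithms to measurements on the encoded states $\rho^k_\pi$ (the heralded-failure bookkeeping, with failure answered by a fair coin, is handled correctly), identify uselessness with $K_0\cap K_1\neq\emptyset$ via \corref{k-useless}, and obtain the solving POVM from a separating hyperplane when the hulls are disjoint. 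What your approach buys: it is self-contained (the hyperplane argument \emph{is} the minimax principle in this setting, rather than a black box), it explicitly constructs both the measurement and the useless distribution, and it ties the proposition to the encoding machinery that is the paper's main technical contribution. What the paper's approach buys: it is two sentences long, and it treats the quantum and classical cases uniformly, whereas you must argue the classical case separately (your reduction from adaptive to non-adaptive classical strategies is correct but only sketched). The compactness/strict-separation fine print you flag is a real issue, but it is not a gap relative to the paper: Yao's principle as invoked there carries exactly the same finiteness/compactness hypotheses, so your version, which makes the caveat explicit, is if anything more careful.
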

Equivalently we could demand that $0<\mu(C_0)<1$ because reweighting 0-inputs and 1-inputs does not affect the uselessness properties of a distribution.  (The same does {\em not} hold for changing the probabilities within the class of 0-inputs or 1-inputs.)  However, we need to avoid the trivial case in which a distribution is useless because the answer is already known perfectly from the prior distribution $\mu$.

\begin{proof}
The ``if'' direction is easy.  If such a $\mu$ exists, then by the definition of uselessness, no algorithm can achieve success probability $>1/2$ with $\leq k$ queries.

For the converse, we use Yao's minimax principle, which states that there exists a distribution $\mu$ for which no $k$-query algorithm can achieve success probability $>1/2$.  Since it is always possible to achieve success probaiblity $\max(\mu(C^{-1}(0)), \mu(C^{-1}(1)))$ by guessing, we must also have $\mu(C^{-1}(0))=\mu(C^{-1}(1))=1/2$.
\end{proof}

A natural generalization of unbounded-error query complexity to non-binary problems would be to define success as guessing the right answer with probability $>\max_j \mu(C_j)$.  In this case, uselessness is now a strictly stronger statement whenever $\mu$ is such that $\mu(C_j)$ is not the same for each $j$.   To see this, let $\nu$ be the distribution over $\pi$ obtained after making some number of queries.  Uselessness states that $\mu(C_j) = \nu(C_j)$ for each $j$, whereas unbounded-error query complexity depends only on whether $\max_j \mu(C_j) = \max_j \nu(C_j)$.

\bibliographystyle{initials}
\bibliography{computer-science-references,quantum-computing-references}

\end{document}